\newcounter{MYtempeqncnt}
\title{
Asymptotic Analysis of SU-MIMO Channels With Transmitter Noise and Mismatched Joint Decoding}
\author{%
\IEEEauthorblockN{Mikko~Vehkaper{\"a}, Taneli~Riihonen, Maksym~Girnyk, 
Emil~Bj{\"o}rnson, \\
M{\'e}rouane~Debbah, Lars~K.~Rasmussen, and Risto~Wichman}
  \thanks{This research has been supported by the 
  Academy of Finland, the Swedish Research Council and the
  ERC Starting Grant 305123 MORE.}}
\newcommand{\varX}{X}
\newcommand{\varV}{V}
\newcommand{\setV}{\mathcal{\varV}}
\newcommand{\setX}{\mathcal{\varX}}
\newcommand{\funcF}{f}
\newcommand{\funcZ}{Z}
\newcommand{\outF}{f}
\DeclareMathOperator*{\extr}{extr}
\DeclareMathOperator*{\tr}{tr}
\DeclareMathOperator*{\diag}{diag}
\newcommand{\vm}[1]{\bm{#1}}
\newcommand{\E}{\mathsf{E}}
\newcommand{\dx}{\mathrm{d}}  
\newcommand{\im}{\mathrm{j}}
\newcommand{\e}{\mathrm{e}}
\newcommand{\Qp}{p}
\newcommand{\Qm}{m}
\newcommand{\Qq}{q}
\newcommand{\QQ}{Q}
\newcommand{\Qptil}{\tilde{\Qp}}
\newcommand{\Qqtil}{\tilde{\Qq}}
\newcommand{\QQtil}{\tilde{\QQ}}
\newcommand{\Qmtil}{\tilde{\Qm}}
\newcommand{\mtxQ}{\vm{Q}}
\newcommand{\mtxQtil}{\tilde{\mtxQ}}
\newcommand{\vsym}{v}
\newcommand{\vvec}{\vm{\vsym}}
\newcommand{\tvec}{\vm{d}}
\newcommand{\uvec}{\vm{t}}
\newcommand{\RmatP}{\tilde{\Rmat}}
\newcommand{\RmatSP}{\vm{\Sigma}}
\newcommand{\Omat}{\boldsymbol{\Omega}}
\newcommand{\OmatP}{\tilde{\Omat}}
\newcommand{\herm}{\mathsf{H}}
\newcommand{\trans}{\mathsf{T}}
\newcommand{\MGFu}{\phi^{(u)}}
\newcommand{\sigmaP}{\tilde{r}}
\newcommand{\xvec}{\vm{x}}
\newcommand{\xvecP}{\tilde{\xvec}}
\newcommand{\wvec}{\boldsymbol{w}}
\newcommand{\yvec}{\vm{y}}
\newcommand{\nsym}{n}
\newcommand{\nvec}{\vm{\nsym}}
\newcommand{\msym}{m}
\newcommand{\mvec}{\vm{\msym}}
\newcommand{\nusym}{\nu}
\newcommand{\nuvec}{\vm{\nusym}}
\newcommand{\osym}{\omega}
\newcommand{\ovec}{\vm{\osym}}
\newcommand{\Dvec}{\vm{\Delta}}
\newcommand{\chivec}{\vm{\chi}}
\newcommand{\Rmat}{\vm{R}}
\newcommand{\Gammamat}{\vm{\Gamma}}
\newcommand{\Imat}{\vm{I}}
\newcommand{\Hmat}{\vm{H}}
\newcommand{\gammabar}{\bar{\gamma}}
\newcommand{\xigamma}{\xi_{\gammabar}}
\newcommand{\sgamma}{s_{\gammabar}}
\newcommand{\sP}{\tilde{s}}
\theoremstyle{plain}
\newcommand{\hhnewtheorem}[2]{%
  \newtheorem{#2}{#1}
  \newenvironment{#1}[1][CCC]{%
    \begin{#2}[##1]
      \nopagebreak
      \begin{adjustwidth}{0.05\textwidth}{0.05\textwidth}}
      {\end{adjustwidth}
    \end{#2}}}
\newcommand{\hhnewproof}[3]{%
  \newtheorem{#3}{#2}
  \newenvironment{#1}{%
    \begin{#3}
      \nopagebreak
      \begin{adjustwidth}{0.05\textwidth}{0.05\textwidth}}
      {\end{adjustwidth}
    \end{#3}}}
\theoremstyle{nonumberbreak}
\theoremstyle{plain}
\theoremstyle{nonumberplain}
\newcommand{\openbox}{\leavevmode
  \hbox to.77778em{%
  \hfil\vrule
  \vbox to.675em{\hrule width.6em\vfil\hrule}%
  \vrule\hfil}}
\begin{document}

\maketitle

\begin{abstract}
	Hardware impairments in radio-frequency components of a wireless system
	cause unavoidable distortions to transmission that are not captured by
	the conventional linear channel model.  In this paper, a `binoisy'
	single-user multiple-input multiple-output (SU-MIMO) relation is considered where the 
	additional distortions are modeled via an additive noise term at the
	transmit side.
	Through this extended SU-MIMO channel model, the
	effects of transceiver hardware impairments on the achievable rate 
	of multi-antenna point-to-point systems are studied.  
	Channel input distributions encompassing 
	practical discrete modulation schemes, such as, QAM and PSK, 
	as well as Gaussian signaling 
	are covered.  In addition, the impact of mismatched detection 
	and decoding when the receiver has insufficient information 
	about the non-idealities is investigated. The numerical results 
	show that for realistic system parameters,
	the effects of transmit-side noise and mismatched decoding 
	become significant only at high modulation orders.
\end{abstract}

\section{Introduction}
\label{sec:intro}

\IEEEPARstart{M}{IMO}, i.e., multiple-input multiple-output, wireless links are a mature research subject and their theory is already well understood \cite{Tse-Viswanath-2005}. However, the extensive body of literature on link-level analysis conventionally concerns signal models of the form $\yvec = \Hmat \xvec + \nvec$ reckoning with an additive thermal-noise term, namely $\nvec$, only at the receiver after the fading channel $\Hmat$. In this paper, we investigate single-user MIMO channels and adopt a generalized (`binoisy') input--output relation from
\cite{Schenk-Smulders-Fledderus-SPS-DARTS2005,Goransson-Grant-Larsson-Feng-SPAWC2008,Suzuki-AnhTran-Collings-Daniels-Hedley-2008Aug,Suzuki-Collings-Hedley-Daniels-PIMRC2009,Studer-Wenk-Burg-WSA2010,GonzalezComa-Castro-Castedo-WSA2011,Studer-Wenk-Burg-EuCAP2011,GonzalezComa-Castro-Castedo-EW2011,Bjornson-Zetterberg-Bengtsson-Ottersten-2013Jan,Zhang-Matthaiou-Bjornson-Coldrey-Debbah-ICC2014}:
\begin{IEEEeqnarray}{l}
\label{eq:yvec_true}
\yvec = \Hmat (\xvec + \vvec) + \wvec,
\end{IEEEeqnarray}
where $\wvec$ is an additive receive-side distortion-plus-noise component.
The system model \eqref{eq:yvec_true} allows including an additive noise term, namely $\vvec$, also at the transmitter, thus making the total effective noise term $\Hmat\vvec + \wvec$ colored and correlated with the fading channel. This small but significant complement yields a MIMO link model whose performance analysis is still an open research niche in many respects. 

Although we primarily aim at extending the capacity theory of binoisy SU-MIMO channels under fading without committing to any particular application, 
the signal model \eqref{eq:yvec_true} originally stems from the practical need for modeling the combined effect of various transceiver hardware impairments which are detailed in~\cite{Fettweis-Lohning-Petrovic-Windisch-Zillmann-Rave-2007Jun,Schenk-Book2008}, and the references therein. However, it is worth acknowledging that the additive noise assumed herein is only a simplified representation of complex nonlinear phenomena occurring due to hardware impairments, especially when considering their joint coupled effects or trying to model residual distortion after compensation. Thus, the binoisy signal model should be regarded as a compromise between facilitating theoretical analysis and resorting to measurements or simulations under more accurate modeling. Yet the central limit theorem further justifies the model by averaging the combined effects of different impairments to additive Gaussian noise when the signal model \eqref{eq:yvec_true} is understood to represent a single narrowband subcarrier within a wideband system.

\begin{figure}[tb]
\centering
\includegraphics[width=\columnwidth]{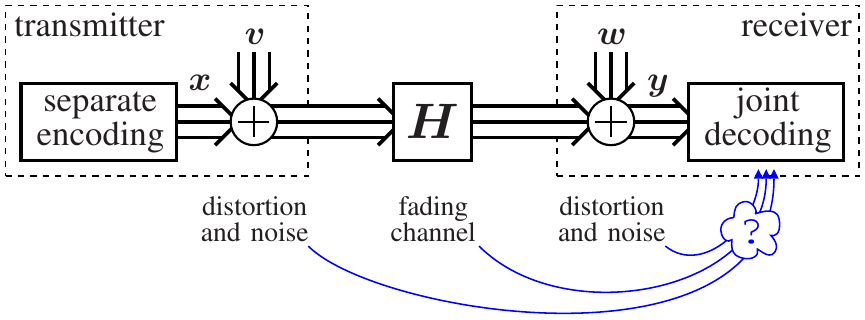}
\caption{System model for non-ideal MIMO communications with 
transmit and receive distortion.  The receiver might be 
misinformed or ignorant of some of the variables in the 
transmission chain leading to mismatched decoding.}
\label{fig:sysmodel}
\end{figure}

Additive receiver hardware impairments can be incorporated into the conventional signal model by increasing the level of the thermal-noise term $\nvec$ by a constant noise figure, e.g., about $3$--$5$~dB,
or by scaling it in proportion to the input signal level such that it matches with $\wvec$. On the other hand, regarding the joint effect of transmitter hardware impairments as an additive transmit-side noise term $\vvec$ is analogous to the principles of practical radio conformance testing. In particular, the common transmitter quality indicator is error-vector magnitude (EVM) which reduces the distortion effects to an additive component and measures its level relatively to signal amplitude~\cite{Gregorio-Cousseau-Werner-Riihonen-Wichman-2013Sep}.

Typical target EVM values guarantee that the signal $\xvec$ is at least $20$--$30$~dB
above the transmit-side noise $\vvec$.  On the other hand, for basic discrete
channel inputs such as quadrature phase-shift keying (QPSK), $\Hmat\xvec$ is usually 
at most $10$--$15$~dB above the receive-side noise $\wvec$, after which the
communication is not anymore limited by noise but the lack of entropy in the
modulation alphabet. This implies that transmitter hardware impairments can be
justifiably omitted in the analysis of simple low-rate wireless systems: Either
$\Hmat\vvec$ is well below the receive-side noise $\wvec$ (say $5$--$20$ dB)
or the signal-to-noise ratio (SNR) is set to an uninterestingly high level.  However,
there has been a trend to improve data rates by using, e.g., quadrature
amplitude modulation (QAM) up to 64-QAM at relatively high SNR, in which case the
transmit-side noise begins to play a notable role in the link-level performance.

The considered system setup corresponding to (\ref{eq:yvec_true}) is shown in Fig.~\ref{fig:sysmodel}. As for MIMO processing, we focus on regular spatial multiplexing where a conventional transmitter separately encodes and sends an independent stream at each of its antennas without having channel state information or being aware of the transmit-side noise it produces; the receiver jointly decodes the output signals of the MIMO channel knowing its instantaneous realization $\Hmat$ and some noise statistics. However, conventional receivers are designed and implemented based on the conventional signal model (where $\vvec=\vm{0}$) due to which they are prone to lapse into suboptimal {\em mismatched decoding} by inaccurately postulating the statistics of the actual noise term $\Hmat\vvec + \wvec$. Even if off-the-shelf receivers can adapt to colored receiver noise, they may not be able to track the variable statistics of the component $\Hmat\vvec$ propagated from the transmitter since it is correlated with the fading channel. Only an advanced receiver would be able to perform {\em matched decoding} knowing perfectly the noise statistics as if it was designed and implemented explicitly based on the generalized binoisy signal model (\ref{eq:yvec_true}).

\subsection{Related Works}

The key reference results for the present study are reported in
\cite{Schenk-Smulders-Fledderus-SPS-DARTS2005,Goransson-Grant-Larsson-Feng-SPAWC2008,Suzuki-AnhTran-Collings-Daniels-Hedley-2008Aug,Suzuki-Collings-Hedley-Daniels-PIMRC2009,Studer-Wenk-Burg-WSA2010,GonzalezComa-Castro-Castedo-WSA2011,Studer-Wenk-Burg-EuCAP2011,GonzalezComa-Castro-Castedo-EW2011,Bjornson-Zetterberg-Bengtsson-Ottersten-2013Jan,Zhang-Matthaiou-Bjornson-Coldrey-Debbah-ICC2014}.
These seminal works originally formulated the research niche around (\ref{eq:yvec_true}) and established the baseline understanding of MIMO communication in the presence of transmit-side noise with numerical simulations and theoretical analysis. The majority of the related works, e.g.,
\cite{Schenk-Smulders-Fledderus-SPS-DARTS2005,Goransson-Grant-Larsson-Feng-SPAWC2008,Studer-Wenk-Burg-WSA2010,Studer-Wenk-Burg-EuCAP2011}, concern regular spatial multiplexing using separate encoding like the present paper but also different variations of joint encoding have been creditably investigated, e.g., in~\cite{Suzuki-AnhTran-Collings-Daniels-Hedley-2008Aug,GonzalezComa-Castro-Castedo-WSA2011}. On the other hand, all the studies that we are aware of assume (implicitly) advanced receivers that know the presence of transmit noise, no matter what form of decoding is used.

Especially, the reference results are polarized such that the scope of analytical studies~\cite{Studer-Wenk-Burg-WSA2010,Studer-Wenk-Burg-EuCAP2011} typically differs from that of studies reporting simulations~\cite{Studer-Wenk-Burg-WSA2010,GonzalezComa-Castro-Castedo-WSA2011,GonzalezComa-Castro-Castedo-EW2011} or measurements~\cite{Suzuki-AnhTran-Collings-Daniels-Hedley-2008Aug,Suzuki-Collings-Hedley-Daniels-PIMRC2009,Studer-Wenk-Burg-WSA2010}. Except for~\cite{Schenk-Smulders-Fledderus-SPS-DARTS2005}, practical discrete modulation schemes, e.g., QAM, have not been previously analytically evaluated in the presence of transmit noise, and simulation-based studies usually concern bit/symbol/packet error rates, not transmission rates which could be more interesting when studying modern adaptive encoding. In contrast, all the analytical capacity studies assume Gaussian signaling and the throughput simulations of~\cite{Goransson-Grant-Larsson-Feng-SPAWC2008} with adaptive modulation and coding are their closest counterpart when it comes to experimental work.

If the receiver does not properly account for the additional transmit-side noise in the received signal, conventional mutual information (MI) is not anymore the correct upper bound for coded transmissions. Rather, due to mismatched decoding, one has to employ other metrics, such as {\em generalized mutual information} (GMI)~\cite{Merhav-etal-1994,Ganti-Lapidoth-Telatar-2000} adopted herein. Another common use for GMI is the analysis of bit-interleaved coded modulation~\cite{BICM-FnT-2008}, while also transceiver hardware impairments \cite{Zhang-2012Feb} and effects of imperfect channel state information at multi-antenna receiver \cite{Asyhari-Fabregas-2012, Weingarten-Steinberg-Shamai-2004} are analyzed in terms of GMI. In particular, MI and GMI are evaluated herein using the \emph{replica method} \cite{Nishimori-2001, Mezard-Montanari-2009}, originating from the field of statistical physics and introduced to the analysis of wireless systems by~\cite{Tanaka-2002Nov,Guo-Verdu-2005Jun}. Since then, the replica method has been applied to various problems in communication theory, e.g., MIMO systems~\cite{Muller-2003Nov,Wen-Wong-Chen-TCOM2007,Takeuchi-Muller-Vehkapera-Tanaka-2013}. For some special cases like Gaussian signaling, the replica trick renders exact asymptotic results when the number of antennas grows without bound, while they can be otherwise considered accurate approximations as shown by comparisons to Monte Carlo simulations. 

\subsection{Summary of Contributions}

In this paper, we investigate two aspects of binoisy MIMO channels that are unexplored in related works despite their fundamental role in understanding the effects of hardware impairments in wireless systems. 
Firstly, analytical capacity results are limited to Gaussian signaling while practical digital modulation is evaluated only based on simple simulations or measurements.
Secondly, the earlier literature focuses on the optimistic case of matched decoding by employing receivers that are actually not available off the shelf but implicitly updated to take account of transmit-side noise.

In particular, this paper contributes to the capacity theory of MIMO communication links by examining the effects of transmit-side noise as follows.
\begin{itemize}

\item Analytical GMI expressions are calculated for studying the rate loss of mismatched decoding when using a conventional receiver which is unaware of the transmit-side noise. Especially, it is shown that the performance remains the same irrespective of how well the noise covariance matrix is known if it is a constant.

\item The above analysis is further translated into corresponding asymptotic high-SNR limits for Gaussian signaling as a complement for the results of~\cite{Bjornson-Zetterberg-Bengtsson-Ottersten-2013Jan}, which covers matched decoding and conventional MI.

\item The analytical expressions provided for both conventional MI and GMI cover many practical discrete modulation schemes such as variations of PSK and QAM. 
This resolves the serious problem that evaluating (G)MI with direct Monte Carlo simulations 
for the present system is computationally infeasible except for cases with small number 
of antennas and low order modulation sets.

\end{itemize}
Extending beyond the scope of the paper, the replica analysis of GMI is also a new aspect at large.

\subsection{Outline of the Paper and Its Nomenclature}

After the considered system model is specified in the following section, the main analytical content of this paper is divided into two parts: Section~III concerns the performance of conventional suboptimal receivers under mismatched decoding, which is analyzed based on GMI; and Section~IV studies conventional MI with advanced receivers, which are aware of transmitter noise and, thus, capable of optimal matched decoding. In Section~V, the presented theory is illustrated with numerical results, including simulations for double-checking its accuracy, which is finally followed by concluding remarks in the last section.  Some general results from literature that are used throughout the paper for derivations are collected in Appendix~A for the convenience of the reader.  Appendices~B contains general description of the replica method and Appendix~C sketches the derivation of the main results in Section~III.

\emph{Notation:}
Complex Gaussian random variables (RVs) are always assumed to be 
proper and the density of such $\vm{x}\in\mathbb{C}^{N}$
with mean $\boldsymbol{\mu}$ and covariance $\Rmat$
is denoted $g(\vm{x} \mid \boldsymbol{\mu};\, \Rmat)$.
For the zero-mean proper Gaussians, we 
say they are circularly symmetric complex Gaussian (CSCG).
For convenience, both discrete and continuous 
RVs are said to have a probability density function (PDF) that 
is denoted by $p$, and we do not separate 
RVs and their realizations.  For postulated 
PDFs we write $q$ and add tilde on top of the related RVs 
(most of the time). Given a RV $x$ that has a PDF $p(x)$, we 
write $x\sim p(x)$ 
(and $\tilde{x}\sim q(\tilde{x})$ for the postulated 
case). 
Statistical expectation is denoted 
$\E\{\,\cdot\, \}$ and, unless stated otherwise, 
calculated over all randomness in the argument using true or 
postulated PDFs, depending on which type of RVs are present.
Integrals w.r.t.\ real-valued 
variables are always over $\mathbb{R}$ 
(for vectors over the appropriate product space) and we tend
to omit the integration limits for notational simplicity.
For a complex variable $z = x + \im y$, we denote
$\int ( \;) \dx z = \int ( \;) \dx x \dx y,$
and similarly for complex vectors.
Logarithms are natural logs and denoted $\ln$ unless 
stated otherwise.

\section{System model}
\label{sec:system_model}

Consider the system model depicted in Fig.~\ref{fig:sysmodel}
and the signal model of $\yvec \in \mathbb{C}^{N}$
written in (\ref{eq:yvec_true})
where $\Hmat \in \mathbb{C}^{N \times M}$ is the channel matrix
and $\xvec \in \mathbb{C}^{M}$ the signal of interest. 
The receive-side distortion plus noise component is 
divided into two parts, namely $\wvec = \nvec + \ovec \in \mathbb{C}^{N}$
where $\nvec$ is caused by thermal noise and $\ovec$ represents 
hardware impairments arising from the non-ideal 
behavior of the radio-frequency (RF) transceivers.  
Similarly, $\vvec = \mvec + \nuvec  \in \mathbb{C}^{M}$ where 
$\mvec$ and $\nuvec$ are related to thermal noise and 
hardware impairments or distortions, respectively, at the transmit-side.
In practice, the effect of $\mvec$ is often negligible compared to $\nuvec$.
In conventional MIMO literature it is common to consider 
only the thermal noise at the receiver, which translates to assuming
$\ovec = \nuvec = \mvec = \vm{0}$ in our more generic system model.

Let us denote the PDF
of the transmit vector $\xvec$ by
$p(\xvec)$ and assume it factorizes as
\begin{equation}
	\label{eq:p_of_x}
p(\xvec)  = \prod_{m=1}^{M} p(x_{m}),
\end{equation}
so that independent streams are transmitted at each transmit antenna.
Furthermore, let $p(x_{m})$ be a zero-mean distribution
with variance $\gammabar_{m}$.  
For later convenience, we let $\Gammamat$
be a diagonal matrix whose non-zero elements are
given by $\gammabar_{1}, \ldots, \gammabar_{M}$, that is,
$\Gammamat = \E \{\xvec \xvec^{\herm}\}$.
The channel $\Hmat$ is assumed to have independent identically distributed 
(IID) CSCG elements with variance%
\footnote{Typically the total power emitted from the transmit antennas 
in MIMO systems is constant; that is, $\tr(\Gammamat) = \gammabar$, where $\gammabar$ 
is some fixed power budget that does not depend on 
$M$.  Hence the elements of $\Gammamat$ need to be 
functions of $M$ in order to satisfy the 
transmit power normalization.
For the following analysis, however, it is more convenient 
to treat the elements of $\Gammamat$ to be independent
of $M$ and let the transmit power normalization be a part of the 
channel.  Clearly, both approaches are mathematically 
fully equivalent.} $1/M$.
The thermal noise samples at the transceivers 
are modeled as CSCG random vectors 
$\mvec$ and $\nvec$ that have independent elements.
For simplicity, we assume that any given noise or hardware
impairment component
is independent of any other RVs in the system. 
The transmit- and receive-side
impairments $\nuvec$ and $\ovec$ are taken to be CSCG
random vectors with covariance matrices
$\Rmat_{\nuvec}$ and $\Rmat_{\ovec}$, respectively.
The distortion plus noise vectors 
$\vvec$ and $\wvec$ are thus CSCG random vectors
whose covariance matrices we denote $\Rmat_{\vvec}$ and $\Rmat_{\wvec}$,
respectively.
Notice that these matrices can be functions
of the statistics of some other RVs albeit
we suppress the explicit statement of such dependence 
at this point for notational convenience.
The SNR without transmit-side noise
is defined as $\tr(\Gammamat)/\tr(\Rmat_{\wvec})$.

The PDF of the received signal, 
conditioned on $\xvec$, $\vvec$ and $\Hmat$, is given by
\begin{equation}
p(\yvec \mid \xvec, \vvec, \Hmat) = 
g(\yvec \mid \Hmat (\xvec+\vvec);\, \Rmat_{\wvec}),
\label{eq:true_pdf_cond_all}
\end{equation}
and the receiver is assumed to know $\Hmat$ and the true 
distribution $p(\xvec)$ of the channel input.
However, the additional transmit-side 
term $\vvec$ is in general  unknown at the receive-side 
and, thus, the PDF
\eqref{eq:true_pdf_cond_all} cannot be directly used for 
detection and decoding. 
Herein, we consider two different scenarios for the 
joint decoding operation at the receiver:
\begin{enumerate}
\item The receiver knows $\Hmat$, the PDFs of the
noise plus distortion terms $\vvec$ and $\wvec$
as well as the distribution of the  data vector $\xvec$.  
\emph{Matched joint decoding} is then based on the
conditional PDF
\begin{IEEEeqnarray}{rCl}
\label{eq:true_decoding_pdf}
p(\yvec \mid \xvec, \Hmat) &=& 
\E_{\vvec}\{
g(\yvec \mid \Hmat (\xvec+\vvec);\, \Rmat_{\wvec})\} 
\IEEEeqnarraynumspace\\
&=&
g(\yvec \mid \Hmat \xvec;\, \Rmat_{\wvec}+\Hmat \Rmat_{\vvec} \Hmat^{\herm}),
\IEEEeqnarraynumspace
\label{eq:true_decoding_pdf_expanded}
\end{IEEEeqnarray}
where the second equality follows by first 
using \eqref{eq:Gint} to calculate the expectation w.r.t.\ 
$\vvec$ and simplifying the end result 
using \eqref{eq:matrix_id1} and \eqref{eq:det_id1}.
Note that the effective noise covariance matrix in 
\eqref{eq:true_decoding_pdf_expanded} depends now on the 
instantaneous channel realization $\Hmat$.
\item The receiver has perfect knowledge of $\Hmat$ and the PDF
of the data vector $\xvec$. Instead of \eqref{eq:true_decoding_pdf}, 
however, the device uses a postulated channel law
\begin{equation}
\label{eq:post_decoding_pdf}
q(\yvec \mid \xvec, \Hmat) = 
 g(\yvec \mid \Hmat \xvec;\, \RmatP),
\end{equation}
for \emph{mismatched joint decoding} 
\cite{Merhav-etal-1994,Ganti-Lapidoth-Telatar-2000}.  
In contrast to $\Rmat_{\wvec}$ in \eqref{eq:true_decoding_pdf_expanded}, that is a random
matrix, the postulated covariance matrix $\RmatP$ in \eqref{eq:post_decoding_pdf} is fixed.
\end{enumerate}

If matched joint decoding is employed, the conventional metric
for evaluating the (ergodic) achievable rate of the system 
for given input distribution $p(\xvec)$ is the
MI between the channel inputs and outputs,
namely,
\begin{equation}
\label{eq:MI}
I(\yvec;\, \xvec)
= \E \{\ln p(\yvec \mid \xvec, \Hmat)\} - 
\E  \{\ln p(\yvec \mid \Hmat)\},
\end{equation}
where $p(\yvec \mid \Hmat) = \E_{\xvec}\{p(\yvec \mid \xvec, \Hmat)\}$
and the expectation is w.r.t.\ all RVs in the system model,
including the channel matrix $\Hmat$.
From the system design perspective, however, it might be impractical to use
\eqref{eq:true_decoding_pdf_expanded} due to complexity of 
implementation, resulting in mismatched decoding.  
To lower bound the true maximum rate that can be achieved 
reliably over channel \eqref{eq:yvec_true} when 
decoding rule \eqref{eq:post_decoding_pdf} is used at the receiver,
we use GMI that is discussed in the next section.

\section{Mismatched Joint Decoding: \\
	Generalized Mutual Information}

\label{sec:GMI}

\subsection{Definition and the Special Case of Gaussian Signaling}

Let us assume that the received signal is given by \eqref{eq:yvec_true}
but the receiver uses \eqref{eq:post_decoding_pdf} for 
decoding.  
Given $p(\xvec)$, the (ergodic) GMI
between the channel inputs and outputs 
is defined as~\cite{Merhav-etal-1994,Ganti-Lapidoth-Telatar-2000}
\begin{equation}
\label{eq:Igmi}
I_{\mathsf{GMI}}(\yvec;\, \xvec)
= \sup_{s>0} I^{(s)}_{\mathsf{GMI}}(\yvec;\, \xvec),
\end{equation}
where, denoting $q^{(s)}(\yvec \mid \Hmat) 
= \E_{\xvec} \{q(\yvec \mid \xvec, \Hmat)^{s}\},$
the $s$-dependent part reads
\begin{equation}
\label{eq:Igmi_S}
I^{(s)}_{\mathsf{GMI}}(\yvec;\, \xvec)
= \E \{\ln q(\yvec \mid \xvec, \Hmat)^{s}\} - 
\E  \{\ln q^{(s)}(\yvec \mid \Hmat)\}.
\end{equation}
Since we consider ergodic rates, the expectations in \eqref{eq:Igmi_S}
are w.r.t.\ all RVs in the system model, including the channel matrix $\Hmat$.
If $I$ is the maximum ergodic rate that can be transmitted 
over the channel \eqref{eq:yvec_true} using input distribution 
$p(\xvec)$ and decoding rule \eqref{eq:post_decoding_pdf}, 
then $I\geq I_{\mathsf{GMI}}$ \cite{Merhav-etal-1994,Ganti-Lapidoth-Telatar-2000}.  
Herein, the decoding based on 
the true channel law
\eqref{eq:true_decoding_pdf} cannot be obtained 
as a special case of the mismatched case
since $\RmatP$ is fixed (see footnote~\ref{fn:nonvalid_for_matched} and
\eqref{eq:eG_1} in Appendix~\ref{sec:replicas}) and, thus,
the case of matched decoding is considered separately in
Section~\ref{sec:matched_decoding}.

We are first interested in evaluating 
the $s$-dependent part of the normalized GMI per transmit stream
$M^{-1}I^{(s)}_{\mathsf{GMI}}(\yvec;\, \xvec)$ 
for given $s>0$.  The optimization over the free parameter 
$s$ is carried out after the suitable expressions are 
found. The first term in \eqref{eq:Igmi_S}
can be written as
\begin{IEEEeqnarray}{l}
\frac{1}{M}\E \{\ln q(\yvec \mid \xvec, \Hmat)^{s} \}
\IEEEeqnarraynumspace\IEEEnonumber\\ [-1ex]
= - \overbrace{ \frac{s}{M}\big[N \ln \pi
+ \ln \det \RmatP
\big]}^{=c^{(s)}} 
\IEEEnonumber\\
\qquad - \frac{s}{M}\E\big\{ (\Hmat\vvec + \wvec)^{\herm}
\RmatP^{-1}
(\Hmat\vvec + \wvec)\big\} 
\IEEEnonumber\\
 = - c^{(s)} 
 - \frac{s}{M}\bigg[\tr(\RmatP^{-1}\Rmat_{\wvec})
+
\frac{1}{M}\tr ( \RmatP^{-1})\tr(\Rmat_{\vvec})
\bigg].
\IEEEeqnarraynumspace 
 \label{eq:first_term_in_MI}
\end{IEEEeqnarray}
The first equality follows from
\eqref{eq:post_decoding_pdf}
by the fact that $\yvec - \Hmat \xvec = 
\Hmat\vvec + \wvec$ when $\xvec$ is given.  The second
equality is a consequence of the assumption that
the channels and noise vectors 
are all mutually independent and $\Hmat$ has zero-mean 
IID entries with variance $1/M$.
Notice that \eqref{eq:first_term_in_MI} is independent of 
$p(\xvec)$ and hence valid for all channel inputs.
Evaluating the second term in \eqref{eq:Igmi_S} is 
more complicated but
\emph{for the special case of Gaussian 
inputs} we have the result shown below.

\begin{example}
\label{example:Gaussian_simulation_mismatched}
For the special case of Gaussian 
inputs; that is, $p(\xvec) = g(\xvec \mid \vm{0};\, \Gammamat)$,
\begin{IEEEeqnarray}{l}
	\frac{1}{M}I^{(s)}_{\mathsf{GMI}}(\yvec;\, \xvec)
	=
	\frac{1}{M} \E_{\Hmat}\bigg\{  \ln \det\big(\RmatP + 
	s \Hmat \Gammamat \Hmat^{\herm}\big)
	\IEEEnonumber\\
	\;  + s 
	\tr\big[
	\big(\Rmat_{\wvec}
	+\Hmat(\Rmat_{\vsym}+ \Gammamat) \Hmat^{\herm}
	\big) \big(\RmatP + 
	s \Hmat \Gammamat \Hmat^{\herm}\big	)^{-1}\big]
	\IEEEnonumber\\
	\; 
-s\tr(\RmatP^{-1}\Rmat_{\wvec})
-\frac{s}{M}\tr ( \RmatP^{-1})\tr(\Rmat_{\vvec})
	- \ln \det \RmatP
	\bigg\}.
	\IEEEeqnarraynumspace
	\label{eq:Igmi_gauss}
\end{IEEEeqnarray}
The result is obtained by first using \eqref{eq:Gint}
and then simplifying with 
\eqref{eq:matrix_id1} and \eqref{eq:det_id1}.  Inserting 
the RHS of \eqref{eq:yvec_true} into the obtained expression and 
taking the expectations w.r.t.\ the noise terms
$\vvec$ and $\wvec$ completes the derivation.
\end{example}

Example~\ref{example:Gaussian_simulation_mismatched} shows that for Gaussian signals 
we only need to average over the channel to obtain the 
$s$-dependent part of GMI.  This is 
doable with Monte Carlo simulation. 
However, finding the optimal $s$ is 
time consuming even in this case and
a simple analytical expression that does not explicitly 
depend on the form of the marginals in \eqref{eq:p_of_x}
would be highly desirable.  
With this in mind, we adopt the following restriction to our system model 
from the physical characteristics of typical real transmitters for simplifying 
the analysis.

\begin{assumption}
\label{ass:uncorrelated_m}
The covariance matrix for the transmit-side
distortion plus noise term $\vvec$ is diagonal so that 
we may write $\Rmat_{\vvec}
= \Rmat_{\mvec} +  \Rmat_{\nuvec} = 
\diag(r^{(1)}_{\vvec}, \ldots, r^{(M)}_{\vvec})$.  Hence, $\vvec$ 
has independent (but not necessarily identically 
distributed) entries drawn according to 
$p(\vsym_{m}) = g (\vsym_{m} \mid 0 ;\,r^{(m)}_{\vvec})$. 
\end{assumption}

The physical meaning of this assumption is that hardware impairments 
at different transmitter branches arise in separate electrical components and 
there are no mechanisms which generate significant correlation between the 
elements of the distortion noise vector. Furthermore, it is actually not 
necessary for the replica analysis but it helps simplify the end result to a 
form whose numerical evaluation is computationally easy.

\subsection{Analytical Results via the Replica Method}
\label{sec:mismatched_results}

If the goal is to calculate the expectations related to
the latter term in \eqref{eq:Igmi_S} analytically and for general 
input distributions, we need 
to employ somewhat more advanced analytical tools than 
the basic probability calculus used in
Example~\ref{example:Gaussian_simulation_mismatched}.  
As we shall see shortly, employing the replica method
provides a formula that is applicable to a variety of input 
constellations, such as Gaussian or QAM.
To begin, let us first denote
\begin{IEEEeqnarray}{l}
-\frac{1}{M}
\E  \ln q^{(s)}(\yvec \mid \Hmat)
= 
c^{(s)} + f (s),  \IEEEeqnarraynumspace
\end{IEEEeqnarray}
where $c^{(s)}$ is defined in \eqref{eq:first_term_in_MI}
and the latter term, equivalent of the 
so-called \emph{free energy} in 
statistical mechanics, reads
\begin{IEEEeqnarray}{l}
f (s)
 \label{eq:mismatched_free_energy}
 \\
=
- \frac{1}{M}
\E \Big\{\! \ln \E_{\xvecP} \big\{\e^{-
[\Hmat (\xvec + \vvec - \xvecP) + \wvec]^{\herm}s\RmatP^{-1}
[\Hmat (\xvec + \vvec - \xvecP)+ \wvec]}\big\}\Big\}.
\IEEEnonumber
\end{IEEEeqnarray}
Now the inner expectation over the postulated channel 
input $\xvecP$ is w.r.t.\ a generic PDF 
\eqref{eq:p_of_x} and cannot be solved using 
\eqref{eq:Gint} as before.  %
The outer expectation is w.r.t.\ the rest of the 
RVs in the system, namely $\{\xvec, \vvec, \wvec, \Hmat\}$.
Due to \eqref{eq:Igmi_S} and \eqref{eq:first_term_in_MI}
the expression to be optimized in the GMI formula thus becomes
\begin{IEEEeqnarray}{l}
\frac{1}{M}
I^{(s)}_{\mathsf{GMI}}(\yvec;\, \xvec)	
\IEEEnonumber\\
\quad = f (s)
-
\frac{s}{M}\bigg[\tr(\RmatP^{-1}\Rmat_{\wvec})
+\frac{1}{M}\tr ( \RmatP^{-1})\tr(\Rmat_{\vvec})
\bigg]. \IEEEeqnarraynumspace
\label{eq:GMI_s_dependent_freeE}
\end{IEEEeqnarray}

\begin{remark}
\label{remark:scaled_identities}
By \eqref{eq:mismatched_free_energy} and \eqref{eq:GMI_s_dependent_freeE},
it is clear that if the receiver assumes that the additive noise in the 
system is spatially white $\RmatP = \sigmaP\Imat_{N}$ with some finite 
sample variance $\sigmaP$, the GMI remains the same for all $\sigmaP>0$ since 
the optimization over $s>0$ in \eqref{eq:Igmi} can be replaced by an optimization 
over a new variable $\sP = s / \sigmaP>0$.  Thus,
if the receiver uses $\RmatP = \sigmaP\Imat_{N}$
for decoding, the GMI is the same for all 
$\sigmaP > 0$ when the transmit and receive covariance matrices 
$\Rmat_{\vvec}$ and $\Rmat_{\wvec}$ are fixed. 
\end{remark}

The main obstacle in evaluating \eqref{eq:GMI_s_dependent_freeE} is 
clearly $f (s)$.  %
This term happens to be, however, of a form that can 
be tackled by the replica method (see Appendix~\ref{sec:replica_overview}). 
The following result is derived in Appendix~\ref{sec:replicas}
under the assumption of the so-called \emph{replica symmetric} (RS) 
ansatz when the system approaches the \emph{large system limit} (LSL),
that is, $M,N\to \infty$ with finite and 
fixed ratio $\alpha = M/N > 0$.
The limit notation is omitted below and 
the results should therefore be interpreted as 
approximations for systems that have finite dimensions.

\begin{prop}
\label{prop:mismatched_general}
Let $m=1,\ldots,M$ and denote
\begin{IEEEeqnarray}{rCl}
	\label{eq:mismatched_correct_input_decoupled}
\chi_{m} &=& x_{m} +  \vsym_{m}, \IEEEeqnarraynumspace\\
\tilde{\chi}_{m} &=& \tilde{x}_{m},
	\label{eq:mismatched_postulated_input_decoupled}
\end{IEEEeqnarray}
where $x_{m},\tilde{x}_{m} \sim p(x_{m})$ and 
$\vsym_{m} \sim g (\vsym_{m} \mid 0 ;\,r^{(m)}_{\vvec})$
are independent for all $m$ by assumption.
Let
\begin{IEEEeqnarray}{rCl}
p(z_{m} \mid \chi_{m}) &=& g(z_{m} \mid \chi_{m};\,\eta^{-1}), \\
q(z_{m} \mid \tilde{\chi}_{m}) &=& g(z_{m} \mid \tilde{\chi}_{m};\, \xi^{-1}),
\end{IEEEeqnarray}
be the PDF of an output $z_{m}$ of an
additive white Gaussian noise (AWGN) channel whose 
input is either \eqref{eq:mismatched_correct_input_decoupled}
or \eqref{eq:mismatched_postulated_input_decoupled}, respectively, 
and corrupted by additive noise with 
variance $\eta^{-1}$ or $\xi^{-1}$, respectively.
The parameters $\eta,\xi$ satisfy 
\begin{IEEEeqnarray}{rCl}
\label{eq:mismatched_eta_general}
\eta &=&
\frac{1}{\alpha}
\frac{\Big[ \frac{1}{N}
\tr \big(\OmatP^{-1}\big)\Big]^{2}}
{\frac{1}{N}
\tr \big(\OmatP^{-1}\Omat\OmatP^{-1}\big)}, \\
\xi &=& 
\frac{1}{\alpha N}
\tr \big(\OmatP^{-1}\big),
\label{eq:mismatched_xi_general}
\end{IEEEeqnarray}
for the given matrices
\begin{IEEEeqnarray}{rCl}
	\label{eq:Omat}
\Omat
&=& 
\Rmat_{\wvec} + \varepsilon \Imat_{N}, \IEEEeqnarraynumspace\\
\OmatP  &=& 
s^{-1}\RmatP+\tilde{\varepsilon}\Imat_{N},\IEEEeqnarraynumspace
\label{eq:OmatP}
\end{IEEEeqnarray}
and variables
\begin{IEEEeqnarray}{rCl}
\label{eq:mismatched_epsilon}
\varepsilon &=& 
\frac{1}{M} \sum_{m=1}^{M} 
\E \big\{| \vsym_{m} + x_{m} - \langle \tilde{x}_{m} \rangle_{q}|^{2}\big\},
\IEEEeqnarraynumspace\\
\tilde{\varepsilon} 
&=&
\frac{1}{M} \sum_{m=1}^{M} 
\E\big\{
| \tilde{x}_{m}-\langle \tilde{x}_{m} \rangle_{q}|^{2}\big\}.
\IEEEeqnarraynumspace
 \label{eq:mismatched_epsilon_postulated}
\end{IEEEeqnarray}
The notation $\langle \tilde{x}_{m} \rangle_{q}$ above 
refers to a decoupled posterior mean estimator
\begin{IEEEeqnarray}{rCl}
	\label{eq:mismatched_pme}
\langle \tilde{x}_{m} \rangle_{q} &=&  
\frac{\E_{\tilde{x}_{m}}\{\tilde{x}_{m} 
q(z_{m} \mid \tilde{x}_{m})\}}{q(z_{m})},
\IEEEeqnarraynumspace
\end{IEEEeqnarray}
where
$q(z_{m}) = 
\E_{\tilde{\chi}_{m}} \{q(z_{m} \mid \tilde{\chi}_{m})\}$.
If we also write 
$p(z_{m}) = 
\E_{\chi_{m}}\{ p(z_{m} \mid \chi_{m})\}$,
the free energy $f (s)$ defined in \eqref{eq:GMI_s_dependent_freeE}
is given under the assumption of the RS ansatz by
\begin{IEEEeqnarray}{rCl}
f_{\mathsf{RS}} (s) 
&=&
\frac{1}{\alpha N}\big[\ln \det\OmatP
+ \tr\big(\OmatP^{-1}\Omat\big)
- \ln \det (s^{-1}\RmatP)\big] 
\IEEEeqnarraynumspace\IEEEnonumber\\
&&-\bigg(\ln \frac{\pi}{\xi}
+ \frac{\xi}{\eta}
+ 
\frac{1}{M} \sum_{m=1}^{M} \int 
p(z_{m}) \ln q(z_{m}) \dx z_{m}\bigg) 
\IEEEnonumber\\
&&-\xi\varepsilon
 + \frac{\xi(\xi-\eta)}{\eta} \tilde{\varepsilon}.
\label{eq:FreeE_mismatched_general}
\end{IEEEeqnarray}
If multiple solutions to the coupled fixed point equations
\eqref{eq:mismatched_eta_general}~--~%
\eqref{eq:mismatched_epsilon_postulated}
are found, the one 
minimizing \eqref{eq:FreeE_mismatched_general} should be 
chosen.
\end{prop}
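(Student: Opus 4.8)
The plan is to evaluate the free energy $f(s)$ in \eqref{eq:mismatched_free_energy} by the standard replica recipe: first replace $\ln(\cdot)$ by $\lim_{n\to 0} n^{-1}(\mathbb{E}\{(\cdot)^{n}\} - 1)$, then compute $\mathbb{E}\{(\cdot)^{n}\}$ for integer $n$ by introducing $n$ i.i.d.\ replicas $\xvecP^{(1)},\dots,\xvecP^{(n)}$ of the postulated input, average over the Gaussian channel $\Hmat$ and the Gaussian noise vectors $\vvec,\wvec$, and finally continue the result analytically to $n\to 0$. Since the exponent in \eqref{eq:mismatched_free_energy} is quadratic in $\Hmat$, the average over the i.i.d.\ CSCG entries of $\Hmat$ (with variance $1/M$) produces a function of the $(n+1)\times(n+1)$ Gram-type overlap matrix formed from the ``true'' combined input $\chivec = \xvec + \vvec$ and the $n$ replicas $\{\xvecP^{(a)}\}$; this is exactly the form treated by the general Gaussian/replica machinery of Appendices~A and~B, so I would quote those results rather than re-derive them. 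Introducing delta functions to fix the overlaps and representing them via conjugate (Fourier) parameters decouples the problem into a single-letter measure over $(\chi_m,\{\tilde x_m^{(a)}\})$ times a ``channel'' factor coming from $\Hmat$.

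Next I would impose the replica-symmetric ansatz on the saddle point: all diagonal replica overlaps equal one value, all off-diagonal replica overlaps equal another, and all ``true–replica'' cross overlaps equal a third, with matching structure on the conjugate parameters. Substituting this ansatz and taking $n\to 0$ collapses the $n$-fold single-letter integral into the scalar AWGN ``decoupled channel'' picture stated in the proposition: the postulated replicas become a single $\tilde\chi_m = \tilde x_m$ observed through additive noise of variance $\xi^{-1}$, the true input $\chi_m = x_m + v_m$ is observed through noise of variance $\eta^{-1}$, and $\langle \tilde x_m\rangle_q$ in \eqref{eq:mismatched_pme} emerges as the posterior-mean estimator associated with $q(z_m\mid\tilde\chi_m)$. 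The self-consistency (saddle-point stationarity) conditions on the RS order parameters then reduce, after identifying $\varepsilon,\tilde\varepsilon$ in \eqref{eq:mismatched_epsilon}–\eqref{eq:mismatched_epsilon_postulated} as the relevant mean-square quantities and $\Omat,\OmatP$ in \eqref{eq:Omat}–\eqref{eq:OmatP} as the effective noise-plus-overlap matrices, to the fixed-point equations \eqref{eq:mismatched_eta_general}–\eqref{eq:mismatched_xi_general}; these come from differentiating the ``$\Hmat$-side'' term, which involves $\ln\det$ and $\tr$ of $\OmatP$ and $\OmatP^{-1}\Omat$, with respect to the conjugate order parameters. Collecting the $\Hmat$-side contribution $(\alpha N)^{-1}[\ln\det\OmatP + \tr(\OmatP^{-1}\Omat) - \ln\det(s^{-1}\RmatP)]$, the ``entropy'' term $-(\ln(\pi/\xi) + \xi/\eta + M^{-1}\sum_m\int p(z_m)\ln q(z_m)\,\dx z_m)$ from the decoupled single-letter channel, and the Lagrange-type corrections $-\xi\varepsilon + \xi(\xi-\eta)\eta^{-1}\tilde\varepsilon$ from the overlap constraints, yields \eqref{eq:FreeE_mismatched_general}. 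The last sentence of the proposition — pick the free-energy-minimizing solution among multiple fixed points — is the usual statistical-mechanics selection rule and needs no separate argument.

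The main obstacle, and the step I expect to be most delicate, is the passage from the integer-$n$ replica expression to the $n\to 0$ limit under the RS ansatz: one has to carefully track which $(n+1)\times(n+1)$ block structure survives, diagonalize it (the off-diagonal-constant blocks have eigenvalues that behave like $1 + n\cdot(\text{something})$, whose logarithms must be expanded to first order in $n$), and verify that the conjugate-parameter Gaussian integrals converge in the relevant regime so that the single-letter factorization is legitimate. A secondary subtlety is bookkeeping the noise vectors $\vvec$ and $\wvec$: because $\vvec$ enters the exponent only through $\chivec = \xvec + \vvec$ and is independent Gaussian with diagonal covariance (Assumption~\ref{ass:uncorrelated_m}), its average is absorbed into the definition of the true combined input $\chi_m = x_m + v_m$ at the single-letter level, while $\wvec$ contributes the additive $\varepsilon$-type shift that builds $\Omat$ in \eqref{eq:Omat}; getting the constants right here is what fixes the precise form of \eqref{eq:mismatched_eta_general}–\eqref{eq:mismatched_epsilon_postulated}. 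Apart from that, the derivation is the standard chain of (i) replica trick, (ii) Gaussian average over $\Hmat$, (iii) overlap decoupling via conjugate parameters, (iv) RS ansatz, (v) $n\to 0$, (vi) saddle-point equations — all of which can be carried out with the identities \eqref{eq:Gint}, \eqref{eq:matrix_id1}, \eqref{eq:det_id1} collected in Appendix~A, and the full computation is relegated to Appendix~\ref{sec:replicas}.
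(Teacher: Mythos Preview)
Your proposal is correct and follows essentially the same route as the paper's Appendix~\ref{sec:replicas}: replica trick on \eqref{eq:mismatched_free_energy}, Gaussian average over $\Hmat$ (and $\wvec$) reducing everything to the $(u{+}1)\times(u{+}1)$ overlap matrix, decoupling via conjugate parameters (the paper phrases this as a large-deviations rate function rather than delta/Fourier, but it is the same mechanism), RS ansatz, $u\to 0$, and saddle-point equations yielding \eqref{eq:mismatched_eta_general}--\eqref{eq:mismatched_epsilon_postulated}. One small wording fix: in your bookkeeping remark, $\wvec$ contributes the $\Rmat_{\wvec}$ part of $\Omat$, while the $\varepsilon\Imat_{N}$ shift comes from the RS overlap combination $\Qp-(\Qm+\Qm^{*})+\Qq$, not from $\wvec$ itself.
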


\begin{proof}
An outline of the derivation is given in
Appendix~\ref{sec:replicas}.
\end{proof}

The above result extends some previous works
such as \cite{Tanaka-2002Nov,Guo-Verdu-2005Jun}
in the direction of correlated noise at the receiver and 
additive transmit-side impairments.  
It is thus clear that the original GMI term \eqref{eq:Igmi_S}
of the MIMO system that suffers from transceiver 
hardware impairments has an interpretation in terms of an 
equivalent \emph{decoupled}%
\footnote{%
This decoupling property is ubiquitous in replica analysis
(see for example \cite{Tanaka-2002Nov,Guo-Verdu-2005Jun})
as well as in random matrix theory 
(see \cite{Tulino-Verdu-2004FnT,Couillet-Debbah-2011} and references therein),
and is one of the key reasons why the asymptotic methods provide
computationally feasible solutions for complex problems.}
scalar system.  This decoupled channel has only additive
distortions but unlike in the conventional case of replica analysis
\cite{Tanaka-2002Nov,Guo-Verdu-2005Jun}, the transmit-side 
has its own noise term. 
It should be remarked, however, that the implicit assumption here is that
$f_{\mathsf{RS}}(s) = f(s)$; that is, the system is not replica symmetry 
breaking (RSB).  We leave the RSB case as a possible future work and 
check the validity of the solution with selected numerical simulations.

For simplicity of presentation, we consider next a few 
practical special cases of Proposition~\ref{prop:mismatched_general} 
where the transmit power is the same for all antennas
and the noise and distortions at the transmit-side are spatially 
uncorrelated, namely,
$\Gammamat = \gammabar \Imat_{M}$ and $\Rmat_{\vvec} = r_{\vvec}\Imat_{M}$.
The receiver postulates spatially white noise 
$\RmatP = \sigmaP\Imat_{N}$ with some variance $\sigmaP>0$.
This 
allows us to write
\begin{IEEEeqnarray}{l}
	\label{eq:GMI_s_dependent_freeE_simple}
	\frac{1}{M} I_{\mathsf{GMI}}(\yvec;\, \xvec)
	=\sup_{\sP>0} 
	\Big\{
	f (\sP) - \alpha^{-1}\sP
  [N^{-1}\tr (\Rmat_{\wvec})
  + r_{\vvec}
  ] \Big\},
   \IEEEeqnarraynumspace
\end{IEEEeqnarray}
where $f (\sP)$ is given by \eqref{eq:mismatched_free_energy}
with $s\RmatP^{-1}$ replaced by $\sP\Imat_{N}$.
Furthermore, in this case all variables are identically distributed
for all $m=1,2,\ldots,M$ so we may omit the subscripts 
related to $m$ in the following.
We still need to fix the input distribution \eqref{eq:p_of_x} to obtain 
the parameters
\eqref{eq:mismatched_epsilon} and \eqref{eq:mismatched_epsilon_postulated}.
For this, we give two concrete 
examples: 1) Gaussian signaling; and 2)  
discrete channel inputs, such as, QAM.

\begin{example}
\label{example:mismatched_gaussian}
Let the channel inputs \eqref{eq:p_of_x} be IID Gaussian, namely,
$p(\xvec) = g(\xvec \mid \vm{0};\, \gammabar\Imat_{M})$ so that
$p(\tilde{\chi}_{m}) = p(x_{m}) = g(x \mid 0;\, \gammabar)$ and
$p(\chi_{m}) 
= g(\chi_{m} \mid 0; \, \gammabar + r_{\vvec})$ 
in Proposition~\ref{prop:mismatched_general}.
The parameter $\xi$ can then be obtained explicitly as
\begin{equation}
	\label{eq:xi_gaussian}
\xi =
\frac{\gammabar \sP (1- \alpha) - \alpha + 
\sqrt{4 \alpha \gammabar \sP  + [ \gammabar \sP (1- \alpha) - \alpha]^{2}}}
{2\alpha \gammabar},
\end{equation}
while $\eta$ and $\varepsilon$ are obtained by solving 
the coupled fixed point equations
\begin{IEEEeqnarray}{rCl}
\label{eq:eta_simple_Gauss}
\eta &=&
\frac{1}{\alpha[N^{-1}\tr(\Rmat_{\wvec}) + \varepsilon]}, \IEEEeqnarraynumspace\\
\label{eq:mse_gauss}
\varepsilon
&=&
\frac{\eta r_{\vvec}+\gammabar(\eta+\xi^{2} \gammabar)}{\eta(1+\xi \gammabar)^{2}}
=
\frac{\gammabar+r_{\vvec}}{(1+\xi \gammabar)^{2}}
+\frac{1}{\eta(1+1/\xi \gammabar)^{2}}.
\IEEEeqnarraynumspace 
\end{IEEEeqnarray}
Additional algebra shows that 
for IID Gaussian inputs, the free energy
\eqref{eq:FreeE_mismatched_general}
reduces to
\begin{IEEEeqnarray}{rCl}
f_{\mathsf{RS}} (\sP) \! &=& \! 
\frac{1}{\alpha}\bigg(
\frac{\xi}{\eta} 
+ \ln \sP +
\ln \frac{1}{\alpha \xi}
\bigg) \! -\xi\varepsilon
+\ln (1+\xi \gammabar)
\!+\! 
\frac{\xi r_{\vvec}}{1+\xi\gammabar}.
\IEEEnonumber\\
\label{eq:FreeE_mismatched_scaled_identities_gaussian}
\end{IEEEeqnarray}
Note that the expression for parameter 
$\tilde{\varepsilon}$ in
\eqref{eq:mismatched_epsilon_postulated}
is not explicitly given here but it is implicitly a part 
of \eqref{eq:xi_gaussian} due to relations 
\eqref{eq:mismatched_xi_general}~and~\eqref{eq:OmatP}.
\end{example}

The computational formula for obtaining the GMI with the above example is 
detailed in Table~\ref{table:prop1}.
\begin{table}[!t]
\renewcommand{\arraystretch}{1.2}
 \setlength{\tabcolsep}{3pt}
\caption{How to Obtain GMI for Gaussian Signaling from Example~\ref{example:mismatched_gaussian}}
\label{table:prop1}
\centering
\begin{tabular}{lp{77mm}}
\hline\\[-2mm]
	1) &
	Choose the parameters that define the MIMO system of interest, namely,
	antenna ratio $\alpha = M/N$, transmit- and receive-side 
	distortion plus noise covariance matrices 
	$\Rmat_{\vvec} = r_{\vvec}\Imat_{M}$ and
	$\Rmat_{\wvec}$, respectively, and the average transmit power 
	per antenna $\gammabar$.  Let also the optimization parameter
	$\sP>0$ be given. \\
	2) & 
	Plug the values of $\{\alpha,\gammabar,\sP\}$
	to \eqref{eq:xi_gaussian} and obtain $\xi$. \\
	3) &
	Insert $\xi$ along with the rest of the necessary parameters in 
	\eqref{eq:eta_simple_Gauss}~and~\eqref{eq:mse_gauss}, and solve 
	$\eta$ numerically, e.g., 
	using an iterative substitution method. \\
	4) & 
	Use the solutions of $\xi$ and $\eta$ in
	\eqref{eq:FreeE_mismatched_scaled_identities_gaussian}
	to obtain the free energy. \\
	5) & 
	Optimize 
	\eqref{eq:GMI_s_dependent_freeE_simple}
	over $\sP>0$. \\ [1mm]
	\hline
\end{tabular}
\end{table}
Notice that there are two non-trivial steps in the algorithm: 
1) the optimization over $s>0$; and 2) the problem
of solving a system of two nonlinear equations with two unknowns.
The first difficulty is not specific to the 
current study and is present in any work that considers GMI as means
to analyze mismatched decoding.  The computational complexity 
of the second problem is negligible compared to the original 
task of taking an expectation over the 
channel matrices in \eqref{eq:Igmi_gauss}. 
Indeed, a typical solution for $\eta$ and 
$\varepsilon$ is obtained after some tens of iterations of an
iterative substitution method.  

For the high-SNR case where $\gammabar\to\infty$ for a fixed covariance matrix
$\Rmat_{\wvec}$, the result in Example~\ref{example:mismatched_gaussian}
can be further simplified as 
shown in Example~\ref{example:mismatched_gaussian_highsnr} below.

\begin{example}
\label{example:mismatched_gaussian_highsnr}
	Let us consider the case of Gaussian signaling as given in  	Example~\ref{example:mismatched_gaussian}
	in the limit $\gammabar\to\infty$.  We assume for simplicity 
	(see, e.g., \cite{Bjornson-Zetterberg-Bengtsson-Ottersten-2013Jan})
	that $\Rmat_{\wvec} = r_{\wvec}\Imat_{N}$ and
	$r_{\vvec} = \gammabar\kappa^{2}$ where $\kappa>0$ and $r_{\wvec}>0$ are
	fixed and finite parameters.
	At high-SNR, there are two possibilities for the parameter $\sP = s / \sigmaP$ 
	in the GMI: 1) the optimal value of $\sP$ is a strictly positive constant; 
	and 2) the value of $\sP$ goes to zero when $\gammabar\to \infty$.
	For the first case, $M^{-1}I^{(s)}_{\mathsf{GMI}}(\yvec;\, \xvec) \to -\infty$ 
	so in order to obtain a consistent solution for the 
	fixed point equations, the parameter $\sP$ has to be inversely proportional
	to $\gammabar$, i.e., $\sP  = \sgamma / \gammabar$
	where $\sgamma$ is a strictly positive finite constant.  
	Then $\xi \to 0$ as $\gammabar\to \infty$,
	and the normalized GMI reduces to 
	\begin{IEEEeqnarray}{rCl}
	\label{eq:GMI_gaussian_mismatched_highsnr}
	\frac{1}{M}I^{\infty}_{\mathsf{GMI}}(\yvec;\, \xvec)
	&=&	\sup_{\sgamma>0} \bigg\{
	\frac{1}{\alpha}
	\ln \bigg(\frac{\sgamma}{\alpha \xigamma}\bigg)
	+\ln (1+\xigamma)
	\IEEEeqnarraynumspace\IEEEnonumber\\
	 &&\qquad\qquad
		+\frac{\kappa^{2} \xigamma }{1+\xigamma}
	-\frac{\sgamma\kappa^2}{\alpha}
	\bigg\}, \IEEEeqnarraynumspace
	\end{IEEEeqnarray}		
in the limit $\gammabar\to\infty$. The auxiliary parameter 
$\xigamma \triangleq \xi\gammabar > 0$ is given by 
	\begin{equation}
		\label{eq:gammaxi_gaussian_infty}
	\xigamma =
	\frac{\sgamma (1- \alpha) - \alpha  + 
	\sqrt{4 \alpha \sgamma + [ \sgamma (1- \alpha) - \alpha ]^{2}}}
	{2\alpha}.
	\end{equation}
	Compared to the finite-SNR case in Example~\ref{example:mismatched_gaussian}, 
	the GMI is now
	directly given by \eqref{eq:GMI_gaussian_mismatched_highsnr}.
\end{example}

The next example provides explicit formulas for the computation of 
GMI given finite discrete constellations, such as, PSK or QAM.

\begin{example}
\label{example:mismatched_disrete}

Let $\mathcal{A}$ be a discrete modulation alphabet 
with fixed and finite cardinality $|\mathcal{A}|$
and consider the GMI \eqref{eq:GMI_s_dependent_freeE_simple}.
Let the channel inputs $x_{m}$ be drawn independently and 
uniformly from $\mathcal{A}$.
The parameters of the decoupled channel model
in Proposition~\ref{prop:mismatched_general}
can be obtained by first solving 
$\xi$ and $\tilde{\varepsilon}$ from
\begin{IEEEeqnarray}{rCl}
\xi &=& 
\frac{\sP}{\alpha (1+\sP\tilde{\varepsilon})},\\
\tilde{\varepsilon} 
&=&
\gammabar- \int q(z) |\langle \tilde{x} \rangle_{q}|^{2} \dx z, 
\IEEEeqnarraynumspace
 \label{eq:mismatched_discrete_epsilon_postulated}
\end{IEEEeqnarray}
using the following definitions for the
decoupled estimator and the postulated channel probability
\begin{IEEEeqnarray}{rCl}
\langle \tilde{x} \rangle_{q} &=&  
\frac{1}{q(z) |\mathcal{A}|}
\sum_{\tilde{x}\in\mathcal{A}}
\tilde{x}
g(z \mid \tilde{x};\, \xi^{-1}),
\IEEEeqnarraynumspace\\
q(z) &=& \frac{1}{|\mathcal{A}|} \sum_{x\in\mathcal{A}}g(z \mid x;\, \xi^{-1}),
\end{IEEEeqnarray}
respectively.  Note that this implies solving two parameters 
from two nonlinear equations and can be done, for example,
by using an iterative substitution method.
After obtaining the solutions for 
$\xi$ (and $\tilde{\varepsilon}$), the rest of the parameters can 
be obtained by solving the two coupled equations
\begin{IEEEeqnarray}{rCl}
\label{eq:eta_simple}
\eta &=&
\frac{1}{\alpha[N^{-1}\tr(\Rmat_{\wvec}) + \varepsilon]},\\
\label{eq:mismatched_discrete_epsilon}
\varepsilon &=& 
\E \big\{|\vsym + x - \langle \tilde{x} \rangle_{q}|^{2}\big\},
\IEEEeqnarraynumspace
\end{IEEEeqnarray}
for $\eta$ and $\varepsilon$,
where the expectation is w.r.t.\ the true 
joint probability of $\{x,\vsym,z\}$.
Finally, the free energy reads
\begin{IEEEeqnarray}{l}
f_{\mathsf{RS}} (\sP) = 
\frac{1}{\alpha}\bigg(
\frac{\xi}{\eta} 
+\ln \sP+
\ln \frac{1}{\alpha \xi}
\bigg) 
-\xi\varepsilon
 + \frac{\xi(\xi-\eta)}{\eta} \tilde{\varepsilon}
\IEEEeqnarraynumspace\IEEEnonumber\\
\qquad\quad-\bigg(
\frac{\xi}{\eta}
+ \ln \frac{\pi}{\xi}
+ \int 
p(z) \ln q(z) \dx z\bigg), \;
\IEEEeqnarraynumspace
\label{eq:FreeE_mismatched_scaled_identities}
\end{IEEEeqnarray}
where we denoted
\begin{IEEEeqnarray}{rCl}
\label{eq:pz_true}
p(z) &=& \frac{1}{|\mathcal{A}|} \sum_{x\in\mathcal{A}}
  g(z \mid x;\,\eta^{-1}+ r_{\vvec}), \IEEEeqnarraynumspace
\end{IEEEeqnarray}
for the decoupled PDF of the received signal.
\end{example}

Notice that the form of $\eta$ in Example~\ref{example:mismatched_disrete}
is the same as in Example~\ref{example:mismatched_gaussian}, but the parameter 
$\varepsilon$ has now a different structure.
Compared to the Gaussian case, 
the equivalent result for IID discrete 
channel inputs looks in general more 
cumbersome.  First of all, we need to solve now two sets of 
equations instead of just one.  
They both contain terms that involve $|\mathcal{A}|$ summations
and there are also two expectations left to
evaluate,  one in 
\eqref{eq:mismatched_discrete_epsilon_postulated}
and another in \eqref{eq:mismatched_discrete_epsilon}.  However, 
both expectations involve only scalar variables.  
This is in stark contrast to
the original problem that involved computing 
$|\mathcal{A}|^{M}$ summations for every
channel and noise / distortion realization and taking 
expectation over the channel and noise that are multidimensional 
integrals.  This makes direct Monte Carlo computation of the 
GMI for discrete signaling in practice infeasible 
for large constellations and numbers of antennas.

\begin{figure*}
	\normalsize
	\setcounter{MYtempeqncnt}{\value{equation}}
	\setcounter{equation}{43}
\begin{equation}
I(\yvec;\,\xvec)=  M \ln |\mathcal{A}| - N
	- \frac{1}{|\mathcal{A}|}	\sum_{\xvec\in\mathcal{A}^{M}}
	\E_{\vvec, \wvec, \Hmat} \Bigg\{
	\ln \Bigg(
	\sum_{\xvecP\in\mathcal{A}^{M}}
	\e^{
	-[
	\Hmat (\xvec - \xvecP + \vvec) + \wvec
	]^{\herm}
	(\Rmat_{\wvec}+\Hmat \Rmat_{\vvec} \Hmat)^{-1}
	[
	\Hmat (\xvec - \xvecP + \vvec) + \wvec
	]}
	\Bigg)\Bigg\}
	\label{eq:MI_matched_discrete}
\end{equation}
\setcounter{equation}{\value{MYtempeqncnt}}
\hrulefill
\end{figure*}

\section{Matched Joint Decoding}
\label{sec:matched_decoding}

\subsection{Definition and the Special Case of Gaussian Signaling}

Let us now consider the case of matched decoding where 
the correct channel transition probability
\eqref{eq:true_decoding_pdf_expanded} is utilized at the receiver. 
The first entropy term in \eqref{eq:MI} reads
\begin{equation}
	\label{eq:matched_first_entropy}
	\E \{\ln p(\yvec \mid \xvec, \Hmat) \}
	 = -\E_{\Hmat} \{\ln \det (\Rmat_{\wvec} + \Hmat\Rmat_{\vvec}\Hmat^{\herm})\} - c,
\end{equation}
where $c = N \ln (\e \pi)$.  It should be remarked that there is still 
an expectation left w.r.t.\ the channel realizations $\Hmat$ in 
\eqref{eq:matched_first_entropy}.
This could be evaluated, for example, using Monte Carlo methods or 
 random matrix theory \cite{Tulino-Verdu-2004FnT,Couillet-Debbah-2011}.
For the special case of Gaussian inputs,
the identities in Appendix~\ref{sec:preliminaries}
allow us to partially calculate also the latter entropy term in \eqref{eq:MI},
providing the following result that is useful for Monte 
Carlo simulations.
\begin{example}
\label{example:Gaussian_simulation_matched}
Let $p(\xvec) = g(\xvec \mid \vm{0};\, \Gammamat)$.
Then, 
\begin{IEEEeqnarray}{rCl}
\frac{1}{M} I(\yvec;\,\xvec)
&=& \frac{1}{M}
\E_{\Hmat} \{\ln \det (\Rmat_{\wvec} + \Hmat(\Gammamat+\Rmat_{\vvec})\Hmat^{\herm})\}
 \IEEEeqnarraynumspace\IEEEnonumber\\
 && \quad - \frac{1}{M} 
 \E_{\Hmat} \{\ln \det (\Rmat_{\wvec} + \Hmat\Rmat_{\vvec}\Hmat^{\herm})\},
\end{IEEEeqnarray}
is the normalized ergodic MI for matched decoding. 
\end{example}

The above expression is relatively easy to compute also by 
brute-force Monte Carlo methods since there is 
only an expectation over the fading.  
Unfortunately, to the best of our knowledge, the latter entropy term 
in \eqref{eq:MI} is mathematically intractable for
rigorous methods like random matrix theory when $p(\xvec)$ is an 
arbitrary distribution 
that satisfies \eqref{eq:p_of_x}.  For example, 
given discrete inputs as in Example~\ref{example:mismatched_disrete},
calculating $\E \{ \ln p(\yvec \mid \Hmat) \}$ and combining it with
\eqref{eq:matched_first_entropy} reduces 
the MI to 
\eqref{eq:MI_matched_discrete} given at the 
top of this page.
This form is computationally very complex and 
can be evaluated using Monte Carlo methods only for small 
number of antennas and simple constellations.
To obtain a result for general input
distribution $p(\xvec)$ that has lower computational complexity,
we resort to the replica method (see Appendix~\ref{sec:replica_overview}).
As before, the results that follow have been written in a 
simplified form where the assumption of LSL is suppressed
for notational simplicity.  

\subsection{Analytical Results via the Replica Method}
\label{sec:matched_results}

\begin{prop}
\label{prop:matched_general}

\setcounter{equation}{44}

Let us write for notational convenience
\begin{IEEEeqnarray}{rCl}
	\label{eq:matched_chi}
\chi_{m} &=& x_{m} +  \vsym_{m}, \qquad m=1,\ldots,M, \IEEEeqnarraynumspace
\end{IEEEeqnarray}
where $x_{m} \sim p(x_{m})$ and 
$\vsym_{m} \sim g (\vsym_{m} \mid 0 ;\,r^{(m)}_{\vvec})$
are independent for all $m$.
Let 
\begin{IEEEeqnarray}{rCl}
	\label{eq:matched_p_z_chi}
p(z_{m} \mid \chi_{m}) &=& g (z_{m} \mid \chi_{m};\,\eta^{-1}), 
\end{IEEEeqnarray}
be a conditional PDF of an AWGN channel whose input is
\eqref{eq:matched_chi} and noise variance is $\eta^{-1}$.
The conditional mean estimator of $\chi_{m}$ received over 
this channel reads
\begin{equation}
\langle \chi_{m} \rangle = 
\frac{\E_{\chi_{m}} \{ \chi_{m}
p(z_{m} \mid \chi_{m})\}}
{\E_{\chi_{m}} \{p(z_{m} \mid \chi_{m})\}},
\end{equation}
where the parameter $\eta$ is given, along with another
parameter $\varepsilon$, as the solution to the 
coupled fixed point equations
\begin{IEEEeqnarray}{rCl}
	\label{eq:eta_matched}
\eta &=& \frac{1}{\alpha N}
 \tr \big[(\Rmat_{\wvec} +  \varepsilon \Imat_{N})^{-1}\big], \IEEEeqnarraynumspace\\
 \label{eq:matched_mmse}
 \varepsilon &=&
 \frac{1}{M}\sum_{m=1}^{M}
 \big[\gammabar_{m}+r^{(m)}_{\vvec} -
 \E |\langle \chi_{m}\rangle|^{2}\big].
 \IEEEeqnarraynumspace
\end{IEEEeqnarray}
If we also define a second set of 
parameters $\eta'$ and $\varepsilon'$ that 
are solutions to the coupled fixed point equations
\begin{IEEEeqnarray}{rCl}
	\label{eq:matched_etaP}
\eta' &=& 
\frac{1}{\alpha N}
\tr \big[(\Rmat_{\wvec} +  \varepsilon' \Imat_{N})^{-1}\big], \IEEEeqnarraynumspace\\
\varepsilon' 
&=& 
\frac{1}{M}
\sum_{m=1}^{M}\frac{ r^{(m)}_{\vvec}}{1+\eta'r^{(m)}_{\vvec}},
\IEEEeqnarraynumspace
	\label{eq:matched_epsilonP}
\end{IEEEeqnarray}
the per-stream MI is finally given by
\begin{IEEEeqnarray}{l}
 \frac{1}{M}I(\yvec;\, \xvec)=
\frac{
\ln \det (\Rmat_{\wvec} +  \varepsilon \Imat_{N})
\!-\! \ln \det (\Rmat_{\wvec} +  \varepsilon' \Imat_{N})
}{\alpha N}
  \IEEEnonumber\\
 - (\eta \varepsilon - \eta' \varepsilon')  
  + \frac{1}{M} \sum_{m=1}^{M} 
\big[I(z_{m};\,\chi_{m}) - \ln (1+\eta' r^{(m)}_{\vvec})\big],
\IEEEeqnarraynumspace 
\label{eq:matched_mi}
\end{IEEEeqnarray}
where 
\begin{equation}
	\label{eq:Izchi}
I(z_{m};\,\chi_{m})
= - 1 - \ln \frac{\pi}{\eta} - \int p(z_{m}) \ln p(z_{m}) \dx z_{m},
\end{equation}
is the MI of the Gaussian channel defined by 
\eqref{eq:matched_chi}
and \eqref{eq:matched_p_z_chi}. 
\end{prop}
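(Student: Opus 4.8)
The plan is to reduce the matched mutual information \eqref{eq:MI} to two Bayes-optimal (matched) MIMO inference problems by the chain rule, and then to evaluate each by the replica method as in Appendix~\ref{sec:replicas}. Write $\chivec = \xvec + \vvec$ as in \eqref{eq:matched_chi}; under Assumption~\ref{ass:uncorrelated_m} its prior factorizes into the per-stream convolutions of $p(x_m)$ with $g(\vsym_m \mid 0;\, r^{(m)}_{\vvec})$. Since $\yvec$ depends on $(\xvec,\vvec)$ only through $\chivec$, the chain rule gives $I(\yvec;\,\xvec) = I(\yvec;\,\xvec,\vvec) - I(\yvec;\,\vvec \mid \xvec) = I(\yvec;\,\chivec) - I(\yvec;\,\vvec \mid \xvec)$. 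The first term is the mutual information of the effective channel $\yvec = \Hmat\chivec + \wvec$ with CSCG noise covariance $\Rmat_{\wvec}$ and a matched receiver --- this is exactly the law to which \eqref{eq:true_decoding_pdf_expanded} marginalizes, and in the corresponding partition function the postulated input is a replica $\tilde{\chivec}$ drawn from the same distribution as $\chivec$. The second term is purely Gaussian: conditioned on $\xvec$, $\yvec - \Hmat\xvec = \Hmat\vvec + \wvec$ with $\vvec \sim g(\vvec \mid \vm{0};\, \Rmat_{\vvec})$, so $I(\yvec;\,\vvec \mid \xvec) = \E_{\Hmat}\{\ln\det(\Imat_N + \Rmat_{\wvec}^{-1}\Hmat\Rmat_{\vvec}\Hmat^{\herm})\} = \E_{\Hmat}\{\ln\det(\Rmat_{\wvec} + \Hmat\Rmat_{\vvec}\Hmat^{\herm})\} - \ln\det\Rmat_{\wvec}$, i.e., precisely the quantity by which \eqref{eq:matched_first_entropy} differs from the matched entropy of the $\chivec$-channel.

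For $I(\yvec;\,\chivec)$ I would follow the replica route of Appendix~\ref{sec:replicas} in its matched, Bayes-optimal specialization. Concretely: write the free energy $-\tfrac1M\E\ln p(\yvec \mid \Hmat)$; invoke the replica identity $\E\ln Z = \lim_{n\to 0}\partial_n \ln \E\{Z^n\}$ for integer $n$, introducing $n$ replica copies of $\tilde{\chivec}$ alongside the genuine $\chivec$; perform the Gaussian average over the IID $\Hmat$ (which couples the replicas only through scalar overlap order parameters); impose the replica-symmetric ansatz; and evaluate the remaining integral by the saddle-point method in the large-system limit $M,N\to\infty$ with $\alpha = M/N$ fixed. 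The saddle-point stationarity conditions are exactly \eqref{eq:eta_matched}--\eqref{eq:matched_mmse}: $\varepsilon$ is the average per-stream MMSE in the decoupled AWGN channel \eqref{eq:matched_p_z_chi}, and $\eta$ is the induced inverse effective noise variance, tied to $\varepsilon$ through the free-energy term $\tfrac1{\alpha N}\ln\det(\Rmat_{\wvec}+\varepsilon\Imat_N)$ by differentiation. Collecting the surviving terms --- the colored-noise extension of the Guo--Verd\'u-type formula --- gives the contribution $\tfrac1{\alpha N}\ln\det(\Rmat_{\wvec}+\varepsilon\Imat_N) - \eta\varepsilon + \tfrac1M\sum_m I(z_m;\,\chi_m)$ to $\tfrac1M I(\yvec;\,\chivec)$, with $I(z_m;\,\chi_m)$ the scalar AWGN mutual information \eqref{eq:Izchi}.

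The term $I(\yvec;\,\vvec \mid \xvec)$ is the Gaussian-input instance of the very same problem, so the identical analysis applies --- and is in fact exact, agreeing with the standard random-matrix deterministic equivalent for $\E_{\Hmat}\{\ln\det(\Imat_N + \Rmat_{\wvec}^{-1}\Hmat\Rmat_{\vvec}\Hmat^{\herm})\}$. It produces the second fixed-point pair \eqref{eq:matched_etaP}--\eqref{eq:matched_epsilonP}, with $\varepsilon' = \tfrac1M\sum_m r^{(m)}_{\vvec}/(1+\eta' r^{(m)}_{\vvec})$ the scalar Gaussian MMSE at noise level $1/\eta'$, per-stream rate $\ln(1+\eta' r^{(m)}_{\vvec})$, and free-energy term $\tfrac1{\alpha N}\ln\det(\Rmat_{\wvec}+\varepsilon'\Imat_N) - \eta'\varepsilon'$. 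Subtracting this from the expression for $\tfrac1M I(\yvec;\,\chivec)$ makes the additive constants cancel and yields \eqref{eq:matched_mi}. When the fixed-point systems admit several solutions, the thermodynamically dominant branch minimizes the respective free energy, exactly as in Proposition~\ref{prop:mismatched_general}.

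The main obstacle is the non-rigorous core of the replica method: the analytic continuation of $\E\{Z^n\}$ from integer $n$ to a neighborhood of $n = 0$, the validity of the replica-symmetric ansatz (absence of replica-symmetry breaking), and the interchange of the $n\to 0$ limit with the large-system limit and the saddle-point evaluation. The one model-specific piece of bookkeeping is to keep the colored receive covariance $\Rmat_{\wvec}$ inside the matrix algebra throughout, so that it resurfaces as $\ln\det(\Rmat_{\wvec}+\varepsilon\Imat_N)$ and $\tfrac1{\alpha N}\tr[(\Rmat_{\wvec}+\varepsilon\Imat_N)^{-1}]$ rather than as scalars, and to verify that the additive constants ($N\ln(\e\pi)$, $\ln\det\Rmat_{\wvec}$) produced by the two replica computations cancel exactly in the difference. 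The detailed derivation would be deferred to an appendix mirroring Appendix~\ref{sec:replicas}.
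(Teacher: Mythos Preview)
Your proposal is correct and follows essentially the same route as the paper: the paper's proof applies the replica machinery of Appendix~\ref{sec:replica_overview} to ``two separate MIMO channels,'' obtained by the substitutions $\xvec_a \to \xvec_a + \vvec_a$ and $\xvec_a \to \vvec_a$, which is exactly your chain-rule decomposition $I(\yvec;\,\xvec) = I(\yvec;\,\chivec) - I(\yvec;\,\vvec \mid \xvec)$ into the $\chivec$-channel and the Gaussian $\vvec$-channel. Your exposition is somewhat more explicit about \emph{why} the two auxiliary problems arise (via the chain rule and the sufficiency of $\chivec$), whereas the paper states the substitutions directly; otherwise the computations and the resulting fixed-point pairs coincide.
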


\begin{proof}
The result can be obtained using Appendix~\ref{sec:replica_overview}
for two separate MIMO channels.  For the first one, 
we replace everywhere 
$\xvec_{a} \to \xvec_{a} + \vvec_{a}, a = 0,1,\ldots,u$ and
and an application of the RM provides the equations 
\eqref{eq:matched_chi}--\eqref{eq:matched_mmse}. 
The formulas \eqref{eq:matched_etaP}--\eqref{eq:Izchi}, on the other hand, 
are obtained by substituting 
$\xvec_{a} \to \vvec_{a}, a = 0,1,\ldots,u$ in Appendix~\ref{sec:replica_overview}.
\end{proof}

Just like Proposition~\ref{prop:mismatched_general} in 
Section~\ref{sec:GMI}, Proposition~\ref{prop:matched_general} 
is valid for any input distribution that satisfies \eqref{eq:p_of_x}.
The solutions to the coupled equations 
\eqref{eq:eta_matched}~and~\eqref{eq:matched_mmse}
as well as \eqref{eq:matched_etaP}~and~\eqref{eq:matched_epsilonP}
can be obtained numerically, e.g., using an iterative substitution method.

For concreteness, we again give examples 
for Gaussian and discrete signaling when the
noise plus distortion is spatially white 
$\Rmat_{\vvec} = r_{\vvec}\Imat_{M}$ and 
transmit power is uniformly allocated
$\Gammamat = \gammabar\Imat_{M}$. This makes
the channels $m=1,2,\ldots,M$ identically 
distributed so we omit the subscript
$m$ in the following.

\begin{example}
\label{example:matched_gaussian}

Let $\Rmat_{\vvec} = r_{\vvec}\Imat_{M}$ and consider
the special case of Gaussian inputs
$p(\xvec) 
= g(\xvec \mid \vm{0}; \, \gammabar\Imat_{M})$. 
Then 
\begin{IEEEeqnarray}{rCl}
I(z;\,\chi)
&=& \ln \big[1 +  \eta(\gammabar + r_{\vvec})\big],
\IEEEeqnarraynumspace \\
\varepsilon &=& 
\frac{\gammabar + r_{\vvec}}{1+
\eta(\gammabar + r_{\vvec})},
\IEEEeqnarraynumspace 
\end{IEEEeqnarray}
and the rest of the parameters are given in 
Proposition~\ref{prop:matched_general}.
\end{example}

We next consider the high-SNR case 
$\gammabar\to\infty$ as in Example~\ref{example:mismatched_gaussian_highsnr}
and compare it to the result obtained
in \cite{Bjornson-Zetterberg-Bengtsson-Ottersten-2013Jan} 
using completely different mathematical methods.

\begin{example}
\label{example:matched_gaussian_highsnr}
For the case 
$\Rmat_{\wvec} = r_{\wvec}\Imat$,
$\Rmat_{\vvec} = \kappa^{2} \gammabar \Imat$
(see, e.g., \cite{Bjornson-Zetterberg-Bengtsson-Ottersten-2013Jan})
we find that 
if $\alpha \leq 1$
then $\gammabar\to \infty$ yields
$\eta = \eta'$ 
and 
$\varepsilon = \varepsilon'$. 
The high SNR limit is therefore
\begin{equation}
\frac{1}{M}I^{\infty}(\yvec;\, \xvec)=
\log\bigg(\frac{1+\kappa^{2}}{\kappa^{2}}\bigg), 
\qquad \alpha \leq 1.
\label{eq:emil_limit_low}
\end{equation}
For the case $\alpha > 1$, both $\eta$ and $\eta'$ tend to zero 
at high SNR while $\varepsilon$ and $\varepsilon'$ grow without bound.
This is not yet sufficient to solve \eqref{eq:matched_mi}.  
However, combining this with the relations
$\eta'\varepsilon'=\eta\varepsilon$ and
$\varepsilon' = \varepsilon \frac{\kappa^{2}}{1+\kappa^{2}}$,
that hold in the limit $\gammabar\to\infty$ for 
$\alpha>1$, provides 
the second part of the high SNR result
\begin{equation}
\frac{1}{M}I^{\infty}(\yvec;\, \xvec)=
\frac{1}{\alpha}\log\bigg(\frac{1+\kappa^{2}}{\kappa^{2}}\bigg), 
\qquad \alpha > 1.
\label{eq:emil_limit_high}
\end{equation}
The asymptotic mutual information expressions 
in \eqref{eq:emil_limit_low} and \eqref{eq:emil_limit_high} coincide
exactly with the results obtained previously in
\cite{Bjornson-Zetterberg-Bengtsson-Ottersten-2013Jan},
as expected.
\end{example}

\begin{example}
\label{example:matched_disrete}

If the channel inputs are from a discrete alphabet
$\mathcal{A}$ as in Example~\ref{example:mismatched_disrete},
the parameter $\varepsilon$ in \eqref{eq:matched_mmse} is obtained 
using
 \begin{IEEEeqnarray}{l}
 \langle \chi \rangle 
=
 \frac{1}{p(z)}
\sum_{x\in\mathcal{A}} \bigg[\frac{1}{|\mathcal{A}|}
g(z \mid x;\,\eta^{-1}+ r_{\vvec})
 \bigg(\frac{x
 + \eta r_{\vvec} z}{1+\eta r_{\vvec}}\bigg)
 \bigg], \IEEEeqnarraynumspace \\
 \E|\langle \chi\rangle|^{2} 
= \int p(z) \E\big\{|\langle \chi\rangle|^{2}\big\}
 \dx z, 
 \end{IEEEeqnarray}
 in Proposition~\ref{prop:matched_general}.
 Here $p(z)$ is given by \eqref{eq:pz_true} and 
 $\langle \chi \rangle$ denotes the conditional mean 
 estimator of \eqref{eq:matched_chi} from the observations
\eqref{eq:matched_p_z_chi}. 
The related MI term reads by definition
\begin{equation}
\label{eq:matched_mi_qam}
I(z;\, \chi ) = \ln \bigg(\frac{\eta}{\e \pi}\bigg)
-  \int p(z) \ln  p(z) \dx z.
\end{equation}
Both \eqref{eq:matched_mmse} and \eqref{eq:matched_mi_qam} need, in general, 
to be solved numerically.
\end{example}

\begin{figure*}[tb]
\centering
\subfloat[Gaussian signaling]{\includegraphics[width=0.97\columnwidth]{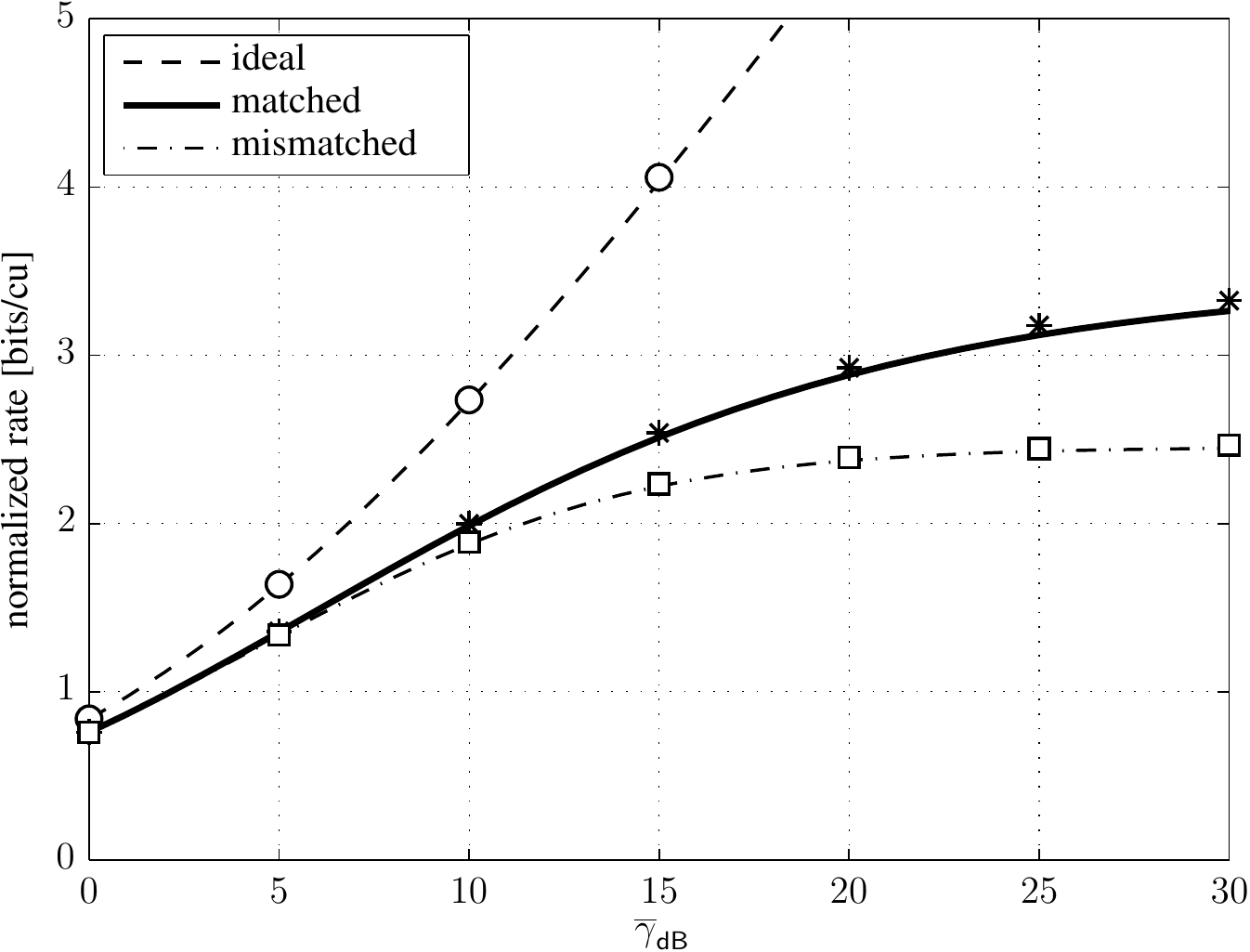}
\label{fig:gaussian_sim_1}}
\hfil
\subfloat[Discrete signaling]{\includegraphics[width=0.97\columnwidth]{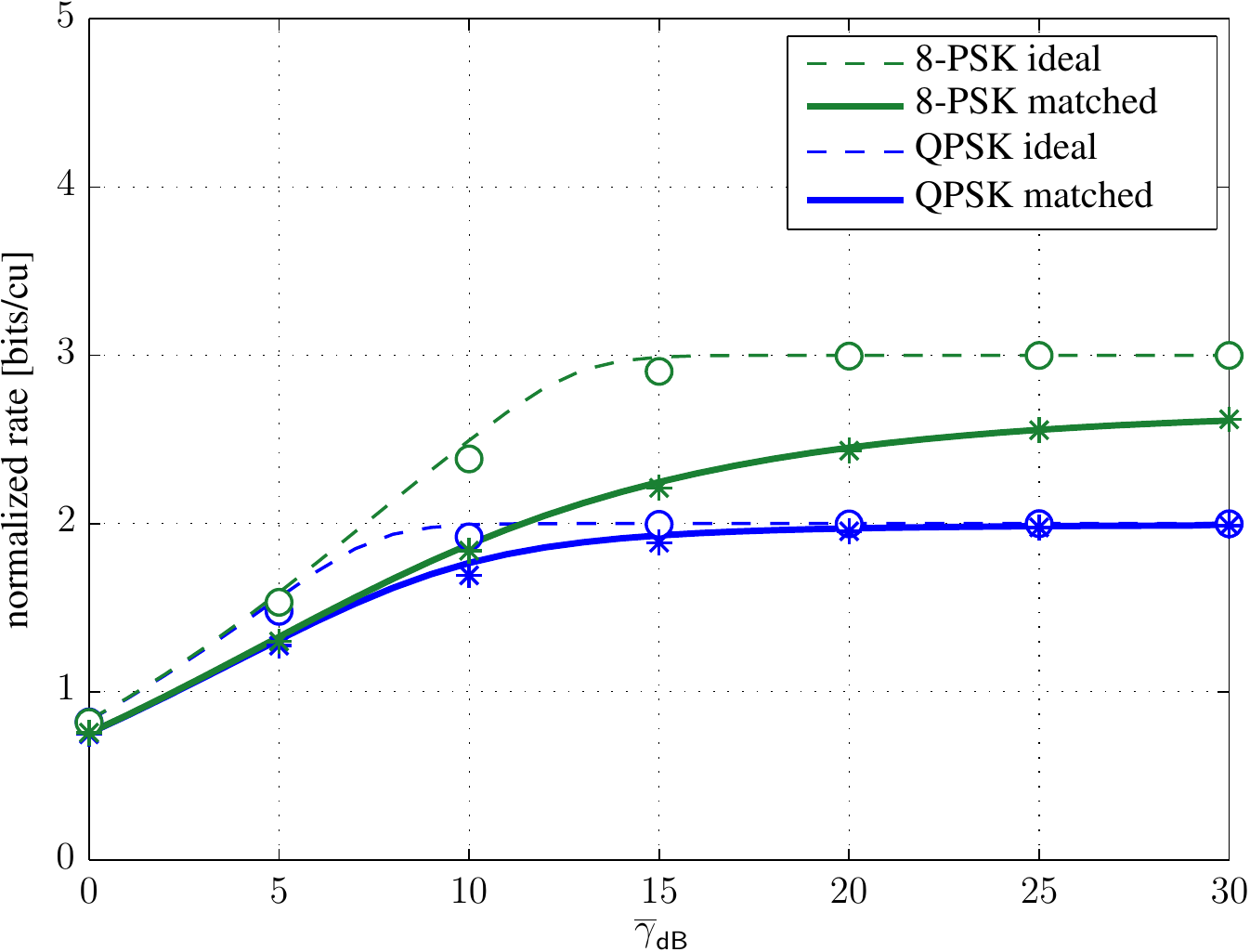}
\label{fig:discrete_sim_1}}
\caption{Normalized rate $M^{-1} I(\yvec;\, \xvec)$ in bits per
channel use (cu) vs.\ SNR for MIMO transmission. Lines for replica results and markers for Monte Carlo simulations for $M = N = 4$
antenna configuration.
Selected cases of ideal hardware $\mathsf{EVM} = -\infty$~dB and 
hardware impairments ($\mathsf{EVM} = -10$~dB) with
matched and mismatched decoding are plotted.}
\label{fig:discrete_Gauss_sims}
\end{figure*}

\section{Numerical Examples}

In the following, assume for simplicity that $\Gammamat = \gammabar \Imat$,
$\Rmat_{\wvec} = \Imat$ and
$\Rmat_{\vvec} = \kappa^{2} \gammabar \Imat$,
where $\kappa = 10^{\mathsf{EVM} / 20}$
and $\mathsf{EVM}$ denotes the EVM of the transmitter in decibels.
The SNR without transmit-side noise is therefore simply $\gammabar$,
or in decibels, $\gammabar_{\mathsf{dB}} = 10 \log_{10}(\gammabar)$.
Furthermore, all cases assume a symmetric antenna setup 
$\alpha = M/N = 1$ for simplicity. 

The first numerical experiment plotted in Fig.~\ref{fig:discrete_Gauss_sims} 
examines the accuracy of the asymptotic analytical results 
when applied to finite-sized systems.
The EVM is fixed to a rather pessimistic value
$\mathsf{EVM}=-10$~dB
to highlight the differences between the ideal and 
imperfect hardware configurations.  
The normalized rate is shown using the asymptotic
replica analysis (lines) and 
Monte Carlo simulations (markers) for a finite-size 
symmetric antenna setup with $M = N = 4$.
In the case of Gaussian signaling, plotted  in 
Fig.~\ref{fig:gaussian_sim_1}, the analytical
approximations for the normalized rate
$M^{-1} I(\yvec;\, \xvec)$ given by 
Examples~\ref{example:mismatched_gaussian}~and~\ref{example:matched_gaussian}
 are quite good when compared to the finite size simulations based on
Examples~\ref{example:Gaussian_simulation_mismatched}~and~%
 \ref{example:Gaussian_simulation_matched}.
For discrete signaling depicted in Fig.~\ref{fig:discrete_sim_1}
we have plotted only the case of matched decoding due to the computational 
complexity of Monte Carlo simulations in the mismatched case.
The gap between asymptotic result presented in Example~\ref{example:matched_disrete} 
and Monte Carlo averaging of \eqref{eq:MI_matched_discrete}
is similar to the Gaussian case for both constellations.   
Figure~\ref{fig:discrete_Gauss_sims} shows that 
the analytical approximation given by the replica method 
is reasonably good already at $M = N = 4$, even though 
formally the limit $M,N \to \infty$ is required by
the analysis.  
Note that Monte Carlo simulation of
\eqref{eq:MI_matched_discrete} has exponential computational complexity
and the system size cannot be increased much higher than $M=4$.
Therefore, the rest of the examples are generated using only the analytical
results given in the previous sections.

\begin{figure*}[tb]
\centering
\subfloat[Normalized rate $M^{-1} I(\yvec;\, \xvec)$ given ideal hardware (dashed lines) or non-ideal hardware and  matched decoding (solid lines).]%
{\includegraphics[width=0.97\columnwidth]{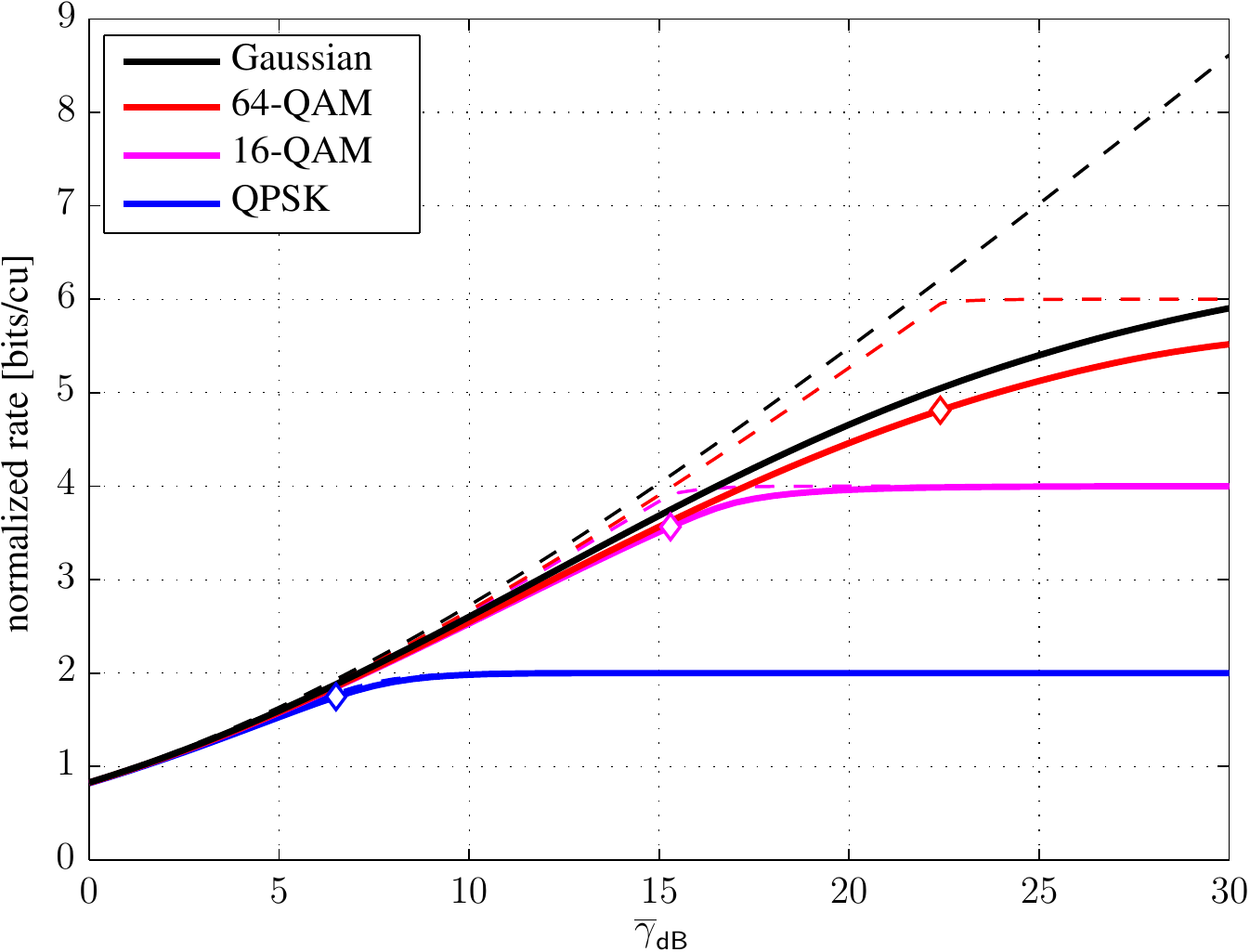}
\label{fig:matched_comp_1}}
\hfil
\subfloat[Rate loss percentage compared to ideal hardware for matched (solid lines) and mismatched (dash-dotted lines) decoding. ]{\includegraphics[width=0.97\columnwidth]{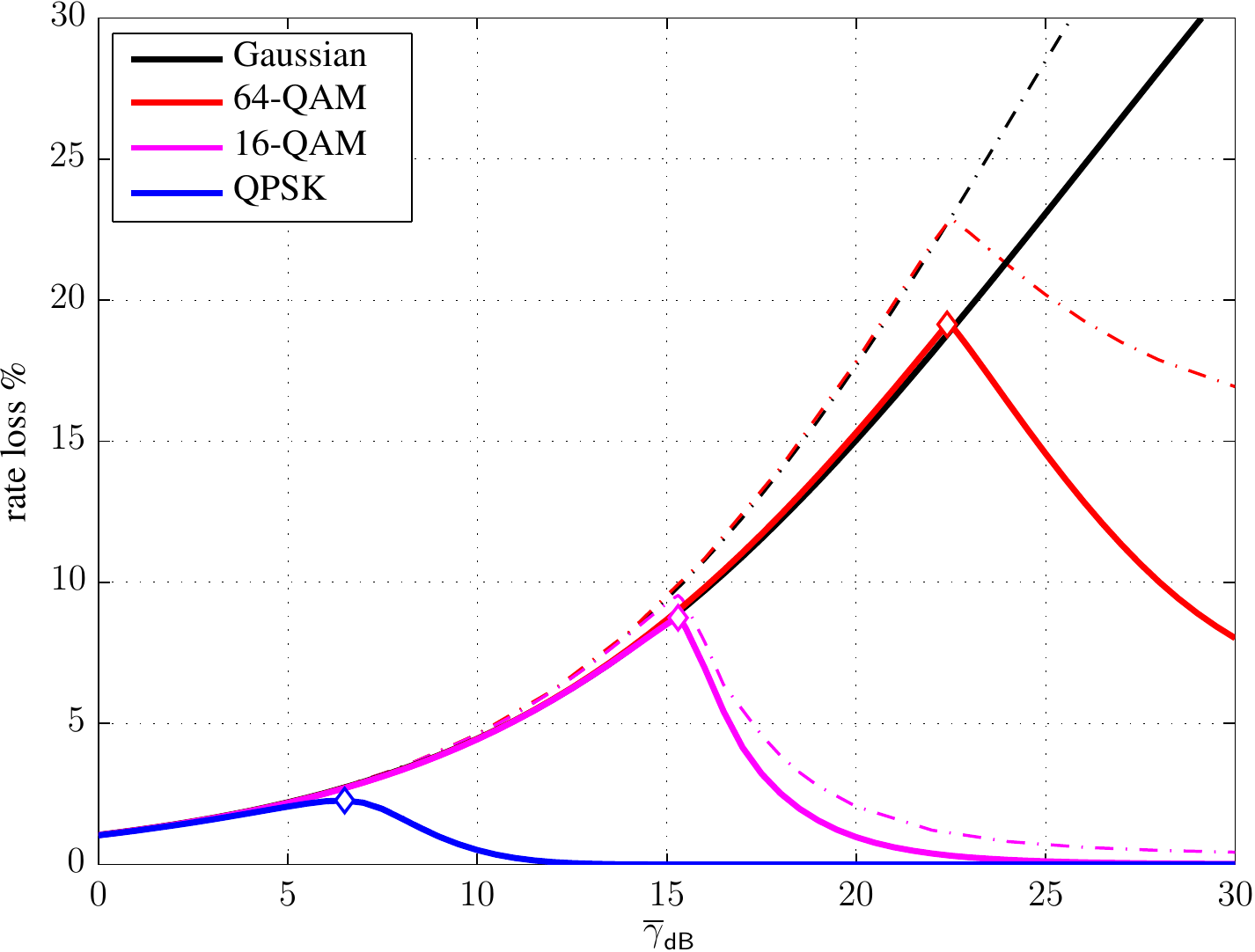}
\label{fig:rateloss_comp_1}}
\caption{
Performance of a MIMO system with $M = N$ antennas and 
given ideal 
($\mathsf{EVM} = -\infty$~dB) or non-ideal hardware 
($\mathsf{EVM} = -20$~dB) for different signaling 
methods.  Markers depict the points where discrete constellations and matched decoding with hardware impairments experience the maximum rate losses compared to the ideal cases.}
\label{fig:rateplot_replicas}
\end{figure*}

Figure~\ref{fig:rateplot_replicas} 
illustrates the performance of an $M = N$ MIMO system
for a more realistic EVM value $\mathsf{EVM}=-20$~dB.
For the case of matched decoding 
we used 
Examples~\ref{example:matched_gaussian}~and~\ref{example:matched_disrete},
while 
Examples~\ref{example:mismatched_gaussian}~and~\ref{example:mismatched_disrete}
were used to obtain the curves representing mismatched decoding.
In Fig.~\ref{fig:matched_comp_1}, the normalized rate 
$M^{-1} I(\yvec;\, \xvec)$ is depicted as a function of SNR
$\gammabar$ in decibels.  For clarity of presentation, 
we have plotted only the ideal case and the case of 
non-ideal hardware with matched decoding.  The Gaussian curves 
(black lines) here are the same as the simulation curves in  \cite[Fig.~2]{Bjornson-Zetterberg-Bengtsson-Ottersten-2013Jan}
given the parameter value $\kappa = 0.1$.  Apart from 
64-QAM and Gaussian signaling, the figure seems to imply 
that lower order constellations exhaust the 
source entropy before the transmit-side noise has any significant
effect for this choice of EVM.
To see more clearly the effect of transmit noise, 
Fig.~\ref{fig:rateloss_comp_1} shows the rate loss 
(in percentage) for the case with transmit noise
$\mathsf{EVM}=-20$~dB 
when compared to the ideal case $\mathsf{EVM}=-\infty$~dB.
The solid lines represent again matched decoding while 
dash-dotted lines are for mismatched decoding.  As expected, 
mismatched decoding reduces the achievable rate
 when compared to matched decoding, but the effect is 
 relatively minor when compared to the total rate loss
caused by the presence of transmit noise itself.  
The markers depict the points where maximum relative
rate loss is experienced for matched decoding.
The same markers 
are also plotted in Fig.~\ref{fig:matched_comp_1} for 
comparison.

In Fig.~\ref{fig:EVMgauss} we have plotted the asymptotic high-SNR 
results given in Examples~\ref{example:mismatched_gaussian_highsnr}%
~and~\ref{example:matched_gaussian_highsnr}.  Note that given a finite value of 
$\mathsf{EVM}$, the normalized rates 
for matched and mismatched decoding have a gap in this case.
For more realistic, but still quite high SNR values of $20$~dB and $30$~dB, the two 
decoding strategies converge to the same value roughly when 
$\gammabar_{\mathsf{dB}} < -\mathsf{EVM}$.
The apparent discrepancy is explained by recalling that the 
asymptotic cases assume 
$\gammabar\to\infty$ for a fixed and nonzero EVM and, thus, as a finite
SNR approximation implies
$\gammabar \gg 1/\kappa^{2}$.  As may be observed
from the lower right corner of the figure, 
the SNR values $20$~dB and $30$~dB have also a 
similar behavior near $\gammabar \gg 1/\kappa^{2}$.
Thus, the high-SNR result is consistent with the finite-SNR
cases.

It is important to guarantee certain performance when designing a system.
The maximum EVM that leads to at most 5\% rate loss 
(as compared to having ideal hardware)
for a fixed input distribution and different given SNRs
is plotted in Fig.~\ref{fig:maxEVM}.  
For Gaussian signaling we have plotted 
both the matched and mismatched cases while discrete cases 
assume matched joint decoding for simplicity.  
As expected, the EVM requirement for Gaussian signaling is a monotonically
decreasing, but not linear, function of SNR.  A simple linear approximation 
that provides a lower bound for the case of 
Gaussian signaling with matched decoding is given by
\begin{equation}
	\mathsf{EVM} = -0.7\cdot\gammabar_{\mathsf{dB}}-13,
\end{equation}
in decibels for the depicted region.  
This can be used as a simple rule-of-thumb for worst-case 
maximum allowed EVM in the system, although we recommend that EVM target values obtained 
in this way are always rounded down to  $1$--$5$~dB precision to include extra safety margin.
For discrete constellations, 
the EVM requirement first follows the Gaussian case but then starts to get looser 
at higher SNRs.  This is expected, as can be observed from Fig.~\ref{fig:matched_comp_1},
since the maximum achievable rate for a discrete constellation saturates at a certain SNR
when the input distribution runs out of entropy. 
After this point, the rate loss can be held fixed for increasing SNR by increasing 
the transmit-side noise variance, or EVM, accordingly.

\begin{figure}[tb]
\centering
\includegraphics[width=\columnwidth]{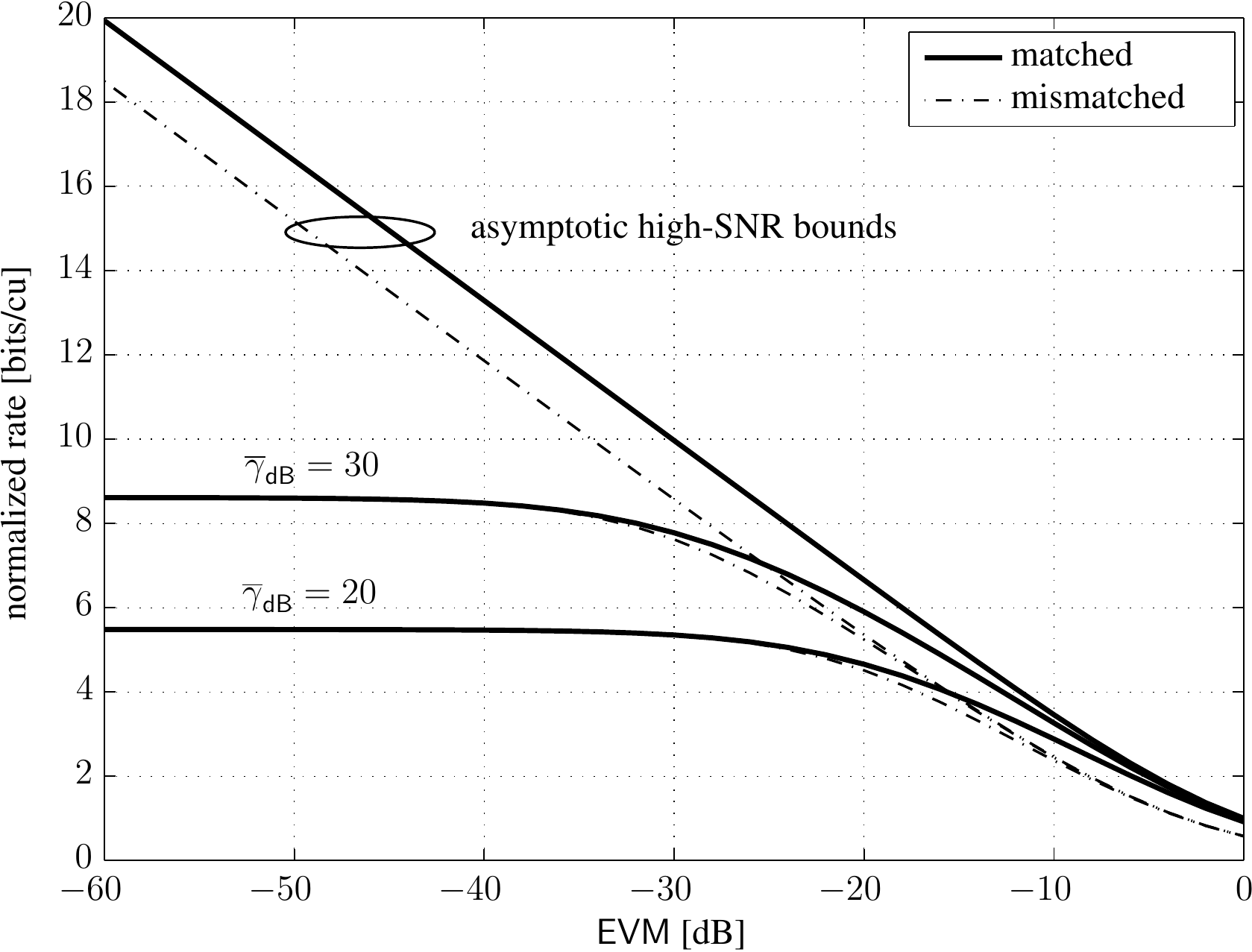}
\caption{
Normalized rate $M^{-1} I(\yvec;\, \xvec)$ in bits per
channel use vs.\ EVM in decibels 
for MIMO transmission with Gaussian signaling. 
Solid lines for matched decoding and dash-dotted lines for 
mismatched decoding.
\label{fig:EVMgauss}}
\end{figure}
\begin{figure}[tb]
\centering
\includegraphics[width=\columnwidth]{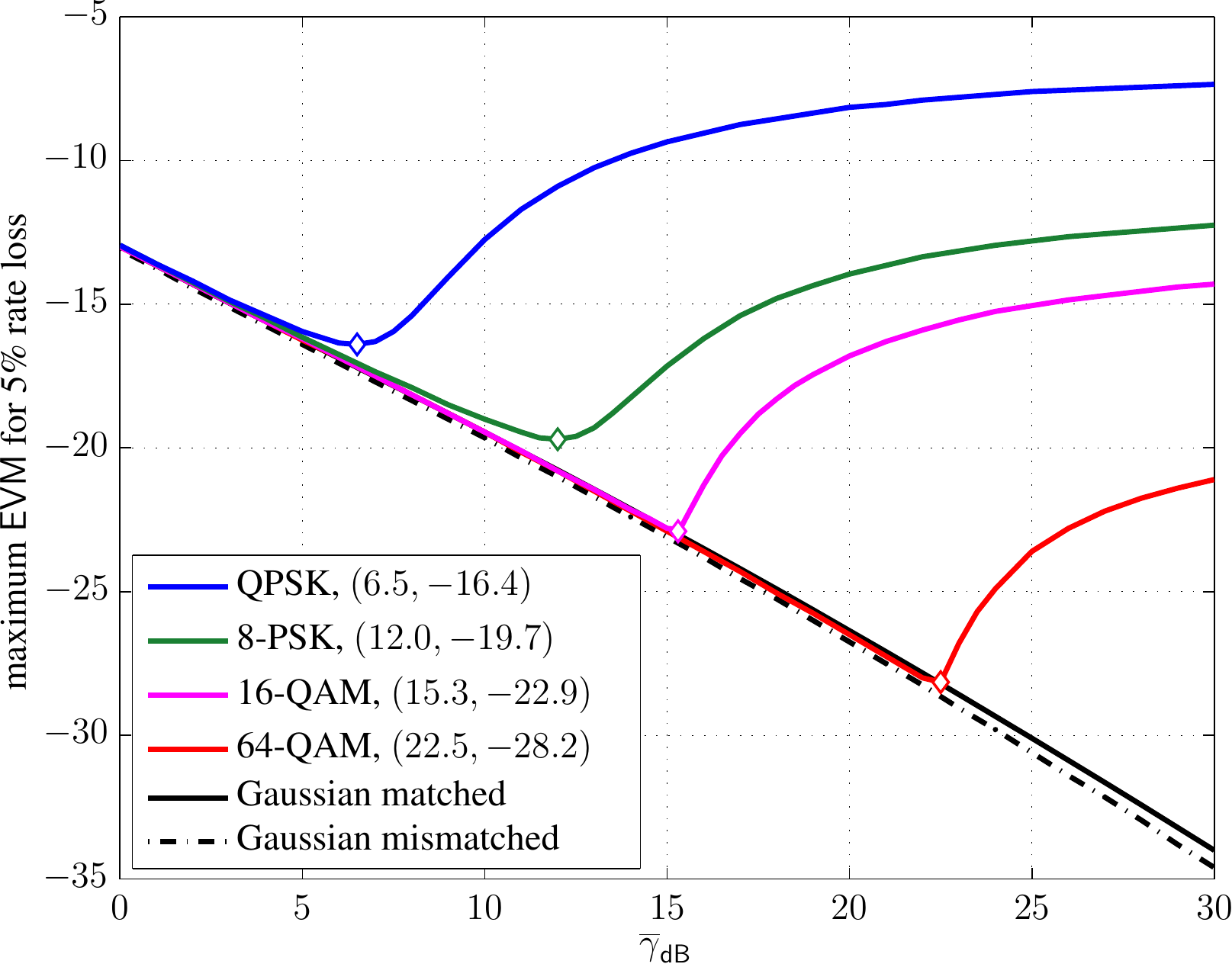}
\caption{Maximum allowed EVM in decibels for matched decoding so that 
the system experiences at most 5\% loss in rate compared to 
the case with ideal hardware ($\mathsf{EVM} = -\infty$~dB).  Markers 
depict the worst case EVM requirement for the discrete constellations
and parenthesis in the legend provide the respective values as 
$(\gammabar_{\mathsf{dB}},\mathsf{EVM})$.
All discrete cases correspond to matched joint decoding at
the receiver.}
\label{fig:maxEVM}
\end{figure}

\section{Conclusions and Future Work}

Considering a `binoisy' channel model, we have derived asymptotic expressions 
for the achievable rate of SU-MIMO systems suffering from transceiver 
hardware impairments. For matched decoding, where the 
receiver is designed and implemented explicitly based on the generalized 
system model, expressions for the ergodic 
mutual information between the channel inputs and outputs have been given.
In addition, a simplified receiver that neglected the hardware imperfections
and performed mismatched detection and decoding has been studied 
via generalized mutual information. 
The mathematical expressions provided in the paper cover practical discrete modulation
schemes, such as, quadrature amplitude modulation, as well as Gaussian signaling.
The numerical results showed that for realistic system parameters, the effects of
transmit-side noise and mismatched decoding become significant only at high modulation
orders.  Furthermore, the effect of mismatched decoding was found to be relatively 
minor compared to the total rate loss caused by the presence of transmit noise itself.
The results were also used to identify the maximum EVM values that allows for certain system operation.

\subsection{Future Work}

For the ease of exposition, the present paper considered the analysis of a 
relatively simple SU-MIMO system where the channel had IID Gaussian elements.
An extension of the replica analysis to Rayleigh fading channels with Kronecker 
correlation can be done by following, e.g., 
the derivations in \cite{Wen-Wong-Chen-TCOM2007}.  
Establishing the effects of transmit-side
noise for the case of correlated channel is an important avenue 
for future work.

As a further extension, it is important to investigate whether similar 
phenomena as observed in the present paper are present also for more complicated 
signal models with discrete channel inputs.  Such systems already analyzed 
in the ideal setting with the replica method include, for example, 
multiuser MIMO and base station collaboration \cite{Wen-Wong-2010}, 
channels with interference and precoding \cite{Girnyk-Vehkapera-Rasmussen-2014} 
and $K$-hop relay channels \cite{Girnyk-Vehkapera-Rasmussen-2014arxiv}.  
Combining the ideas from the present paper and 
\cite{Wen-Wong-2010, Girnyk-Vehkapera-Rasmussen-2014,Girnyk-Vehkapera-Rasmussen-2014arxiv}
would provide a possible approach to solving such cases.

\appendices

\section{Useful Results}
\label{sec:preliminaries}

Here we collect useful results that are 
used often in the paper.   All matrix operations below
are implicitly assumed to be well-defined.
The \emph{Gaussian integration} formula for 
vector $\vm{x}\in\mathbb{C}^{N}$ is given by
(see, e.g., \cite[Appendix~I]{Moustakas-Simon-2007})
\begin{equation}
\label{eq:Gint}
\frac{1}{\pi^{N}}\int \e^{-\vm{x}^{\herm} \vm{M} 
\vm{x}+ 2 \Re\{\vm{b}^{\herm} \vm{x}\}} \dx \vm{x} 
= \frac{1}{\det(\vm{M})}\e^{\vm{b}^{\herm}\vm{M}^{-1}\vm{b}},
\end{equation}
and used in
Sections~\ref{sec:system_model}~--~\ref{sec:matched_decoding} 
and Appendix~\ref{sec:replicas}.  Similarly, the 
\emph{matrix inversion lemma} \cite{Bernstein-2009}
\begin{IEEEeqnarray}{l}
(\vm{W}^{-1}+\vm{U}\vm{T}^{-1}\vm{V}^{\herm})^{-1} \IEEEnonumber\\
\qquad \qquad = \vm{W}-\vm{W}\vm{U}
(\vm{T}+\vm{V}^{\herm}\vm{W}\vm{U})^{-1}\vm{V}^{\herm}\vm{W},
\IEEEeqnarraynumspace
\label{eq:matrix_id1}
\end{IEEEeqnarray}
and the related \emph{determinant identity} 
\begin{IEEEeqnarray}{l}
	\label{eq:det_id1}
\det(\vm{W}^{-1}+\vm{U}\vm{T}^{-1}\vm{V}^{\herm}) \IEEEnonumber\\
\qquad \qquad =\det(\vm{T}+\vm{V}^{\herm}\vm{W}\vm{U})\det(\vm{W}^{-1})\det(\vm{T}^{-1}),
\IEEEeqnarraynumspace
\end{IEEEeqnarray}
are employed several times in the paper.

\section{Replica Method}
\label{sec:replica_overview}

Consider a function $\funcZ$ that maps RVs
to real numbers%
\footnote{In the following we refrain differentiating random variables and their realizations
for notational convenience.  Also, 
$\funcZ$ and, as a result, $\outF$ can depend on some parameters 
(non-random variables) that are not explicitly stated.} and
define two sets of RVs, $\varV\in\setV$ and $\varX\in\setX$, with 
joint probability $P_{\varV,\varX}$.  Assume
for convenience that $P_{\varV,\varX}$ can be described in terms of a joint 
PDF $p(\varV,\varX)$ and denote the marginal PDFs of $\varX$ and $\varV$
$p_{\varX}(\varX)$ and $p_{\varV}(\varV)$, respectively.  
Then, both in statistical mechanics and communication theory, we often encounter
a formula
\begin{IEEEeqnarray}{rCl}
	\label{eq:replica_Eln_1}
	\outF &=& - \frac{1}{M} \E_{\varV} \big\{\ln  \E_{\varX}\{\funcZ(\varV,\varX)\} \big\}
	\IEEEnonumber\\
	 &=& - \frac{1}{M} \int_{\setV} p_{\varV}(\varV) \ln \funcZ(\varV) \, \dx \varV,
	 \IEEEeqnarraynumspace
\end{IEEEeqnarray}
where $\funcZ(\varV) =  \int_{\setX} p_{\varX}(\varX)\funcZ(\varV,\varX) \dx \varX$.
	In physics jargon, the variables $\varV$ are said to be \emph{quenched}
	and the quantity \eqref{eq:replica_Eln_1} is the average \emph{free energy density}
	of a system whose \emph{partition function} is $\funcZ(\varV)$.
Two concrete examples of \eqref{eq:replica_Eln_1} are:
\begin{enumerate}
	\item Let $\funcZ(\varV,\varX) =
	g(\yvec \mid \Hmat \xvec;\, \Rmat_{\wvec})$ be the conditional 
	PDF of the observation in an ideal MIMO channel with
	$\varV = \{\yvec,\Hmat\}$ and $\varX = \{\xvec\}$, where
	$\xvec$ has IID\ elements from a discrete modulation 
	set $\mathcal{A}$, such as PSK or QAM.
	Then \eqref{eq:replica_Eln_1} represents a normalized version of the second term 
	in \eqref{eq:MI}, namely, the (normalized) 
	total \emph{entropy} of the received signal $\yvec$
	given a realization of $\Hmat$ and
	averaged over all possible realizations of $\Hmat$.	
	\item Let $\funcZ(\varV,\varX) = 
	\e^{\beta\vm{\sigma}^{\herm}\vm{J}\vm{\sigma}}$, where 
	$\beta>0$ denotes the inverse temperature, 
	$\varV = \vm{J} \in \mathbb{R}^{M \times M}$ a coupling matrix
	and $\varX = \vm{\sigma}\in\{\pm 1\}^{M}$ a 
	spin configuration.  If $p_{\varV}(\varV)$ %
	is a uniform probability over $\vm{\sigma}$
	and $\vm{J}$ has, e.g., IID\ Gaussian elements, then
	\eqref{eq:replica_Eln_1} is the average \emph{free energy density} of 
	a mean-field Ising spin glass in the absence of external field
	(up to trivial constants).
\end{enumerate}
In both cases, $\outF$ captures important properties of the 
system at hand 
and obtaining a computable formula for \eqref{eq:replica_Eln_1} would be of great 
interest.  This seems infeasible though since the number of 
terms in the expectation is exponential in $M$.

\subsection{Outline of the Replica Method}  

One method for solving \eqref{eq:replica_Eln_1} is the \emph{replica method} (RM)
from equilibrium statistical mechanics.  While the RM is extremely versatile,
it unfortunately lacks mathematical rigor 
in some parts (see, e.g., \cite{Tanaka-2002Nov,Nishimori-2001, Mezard-Montanari-2009}).
However, due to its success both in physics and engineering, it is generally 
agreed to be at least a valuable starting point for analysis of problems that 
seem otherwise too difficult to handle.  
A cursory overview of 
literature about the RM inside a specific field or topic may paint the picture 
that the RM is a fixed set of mathematical methods which can be applied 
to any suitable problem at hand.  This is not entirely accurate and
conceptually the RM can be seen more like a systematic way 
of turning a very difficult problem into a more manageable one than a set of specific
tools that actually solve the problem.  Indeed, the mathematical 
methods that are used at different stages of the RM can often be chosen from 
a variety of choices, although it is very common to have 
some form of \emph{large deviations theory} as part of the 
analysis (see \textbf{Step 2} below).    
Thus, instead of trying to be entirely general, we describe next (one form of) 
the steps taken in the RM in the context of the first example above. 

\begin{step}[Replica trick]
	Consider  \eqref{eq:replica_Eln_1} and write it as
	\begin{IEEEeqnarray}{rCl}
	\funcF &=&
	-\frac{1}{M}\lim_{u\to 0^{+}}
	\frac{\partial}{\partial u} 
	\ln \E_{\varV}\{[\funcZ(\varV)]^{u}\}
	\IEEEnonumber\\
	&=&
	- \frac{1}{M}\lim_{u\to 0^{+}}
	\frac{\partial}{\partial u} 
	\ln
	\E_{\varV}\bigg\{
	\bigg(\sum_{\xvec\in \mathcal{A}^{M}} p_{\varX}(\xvec) 
	\funcZ(\varV,\xvec)  \bigg)^{u} %
	\bigg\} \IEEEeqnarraynumspace\IEEEnonumber\\
	&=& -  \frac{1}{M}\lim_{u\to 0^{+}}
	\frac{\partial}{\partial u} \ln \Xi(u),
	\label{eq:F_alternative_1}	
	\end{IEEEeqnarray}	
where $u\in\mathbb{R}$ and we denoted 
$\Xi(u) = \E_{\varV}\{[\funcZ(\varV)]^{u}\}$.  
Then, \emph{assume} that we can treat $u$ as an integer when 
we take the expectation, namely,
\begin{IEEEeqnarray}{rCl}
	\label{eq:integer_u_1}
	\Xi(u) &=& 
	\E_{\varV}\bigg\{
	\prod_{a=1}^{u} 
	\sum_{\xvec_{a}\in\mathcal{A}^{N}} 
	p_{\varX}(\xvec_{a}) \funcZ(\varV,\xvec_{a}) 
	\bigg\} \\ 
	&=& 
	\frac{1}{\pi^{u N}(\det \Rmat_{\wvec})^{u}}	 \IEEEnonumber\\
	&&\times
	\E_{\varV}\bigg\{
	\sum_{\{\xvec_{a}\}} 
	\prod_{a=1}^{u}
	\bigg[
	\e^{-(\yvec-\Hmat \xvec_{a})^{\herm}
	\Rmat_{\wvec}^{-1}(\yvec-\Hmat \xvec_{a})
	}
	 p_{\varX}(\xvec_{a})
	\bigg] \bigg\}, \IEEEnonumber
 	\label{eq:integer_u_2} 
\end{IEEEeqnarray}	
where the summation in the last expression is over 
the set $\{\xvec_{a}\}_{a=1}^{u}$.  After taking the expectations, 
if we manage to write \eqref{eq:integer_u_1} in a form that 
does not explicitly force $u$ to be an integer,
\emph{invoke analytical continuity} to extend $u$ to real numbers. 
\end{step}

The step above is at the very heart of the RM.  It is important 
to realize that the equalities in 
\eqref{eq:F_alternative_1} are provably true
if differentiation under the integral sign is permitted and $u\in\mathbb{R}$.
The part lacking rigorous mathematical justification is
\eqref{eq:integer_u_1}, especially when 
combined with the next two steps.  Somewhat surprisingly, however, 
the end results of RM can sometimes be proved to be exact.  Examples
of such cases are: MIMO channel with Gaussian inputs, random energy model (REM)
and Sherrington-Kirkpatrick model of spin glasses 
(see, e.g., \cite{Tanaka-2002Nov, Guo-Verdu-2005Jun, Nishimori-2001, 
Mezard-Montanari-2009}
and references therein).

\begin{step}[Large system limit]
	Let the system approach the LSL, that is, the dimensions of the 
	channel matrix $\Hmat$ grow without bound at a finite  and fixed 
	ratio $\alpha = M / N > 0$.  Furthermore, assume that the 
	limits w.r.t.\ $u$ and $M$ commute, 
	so that we can first calculate the expectations in \eqref{eq:integer_u_1}
	in the LSL and then let $u\to 0$, as in \eqref{eq:F_alternative_1}.
\end{step}

The LSL assumption is natural in equilibrium 
statistical mechanics (e.g.\ the second example above), where the systems 
contain usually very large numbers of interacting particles $M$. 
In communication theory, the equivalent would be, e.g., a MIMO 
systems with large antenna arrays or a CDMA with
large number of simultaneous users.
It is in fact quite common to write the LSL assumption
directly as a part of the replica trick in \eqref{eq:F_alternative_1}. 
The steps are separated here since the replica trick could also be used 
for finite sized systems.  Due to mathematical difficulty of such cases, 
however, both steps are  usually  found together.  
The assumption of commuting limits is typically postulated
a priori and rigorous justification of this step is beyond the scope of this paper.

Let us denote the true
transmitted vector $\xvec_{0}$, so that $\yvec = \Hmat \xvec_{0} + \wvec$
is the generating model for the observation $\yvec$ and
we can equivalently write 
$\varV = \{\wvec, \xvec_{0}, \Hmat\}$.
Returning then to \textbf{Step~1}, we note that although the 
\emph{replicated} vectors $\{\xvec_{a}\}_{a=1}^{u}$
act as IID\ RVs drawn according to $p_{\varX}$ 
in \eqref{eq:integer_u_1} when conditioned on 
$\varV$, they can be correlated if not conditioned on $\varV$.
We examine this through the empirical correlations between  
the vectors in the set $\varX_{u+1} = \{\xvec_{a}\}_{a=0}^{u}$ using 
\emph{overlap matrix} 
$\mtxQ \in\mathbb{C}^{(u+1)\times(u+1)}$, whose $(a,b)$th element%
\footnote{The row/column indexes of $\mtxQ$ are $0,1,\ldots,u$ 
so that the correlations are measured also w.r.t.\ the true 
transmitted vector $\xvec_{0}$.  Furthermore, due to \eqref{eq:p_of_x}, 
the empirical correlations can be expected to converge 
to the true ones in the LSL postulated in \textbf{Step 2}.}
is given by $Q_{a,b} = M^{-1}\xvec^{\herm}_{b} \xvec^{}_{a}$.
Then, the structure that is imposed on $\mtxQ$ divides the replica analysis 
into two rough categories as described below.

\begin{step}[Replica symmetry]

	The \emph{RS ansatz} or \emph{RS assumption} means 
	that the indexes $a=1,\ldots,u$ are permutation symmetric and
	$\mtxQ$ can be written in terms of four parameters, for example,
	$Q_{0,0} = \Qp$, $Q_{0,a} = \Qm, a\geq 1,$
	$Q_{a,a} = \QQ, a\geq 1,$ and $Q_{a,b} = \Qq, a\neq b \geq 1$.  
	Note that $\mtxQ = \mtxQ^{\herm}$ by construction.
	If $\mtxQ$ is not of the RS form, it is said to have 
	replica symmetry breaking (RSB)
	structure whose  analysis is much more involved 
	\cite{Nishimori-2001, Mezard-Montanari-2009}. 

\end{step}

The importance of the RS assumption will become clear when we present a rough
sketch of the analysis of an ideal MIMO channel.  We also note that the 
overlap matrix given in \textbf{Step 3} allows the ``zeroth'' index to be 
treated separately to take into account the possibility
that either $\xvec_{0}$ has different distribution 
than $\xvec_{a}$ when $a \geq 1$, or the decoder uses mismatched statistics, i.e.,
$\funcZ(\varV,\xvec_{a})$ does not match the probability law
of the observation $\yvec = \Hmat \xvec_{0} + \wvec$ as in 
Appendix~\ref{sec:replicas}. 
For the simplified case considered below, however,
we have $\Qp = \QQ$ and $\Qm = \Qq$ since the 
indexes $a=0,1,\ldots,u$ can be treated on equal footing and two 
parameters is sufficient to define the RS form of $\mtxQ$.

Next we give a brief and informal example of 
replica analysis for ideal MIMO channel.  The reader may be surprised to find out that 
most of the discussion below deals with details about how to obtain the necessary 
formulas when we follow the three stages above and not about those stages per se.

\subsection{Average Over the Channel and Noise}

The starting point of our replica calculations 
is \eqref{eq:integer_u_1}, where we use the generating 
model of $\yvec$ to write in the 
exponential $\yvec - \Hmat \xvec_{a} = \wvec - \Hmat(\xvec_{0} - \xvec_{a})$.
The first task is then to compute the expectation w.r.t.\ $\wvec$ and $\Hmat$ 
for a fixed $\varX_{u+1} = \{\xvec_{a}\}_{a=0}^{u}$ that satisfies the correlations
of the RS overlap matrix $\mtxQ$.  Note that we cannot assume anymore that the vectors 
in $\varX_{u+1}$ are independent since we changed the order of expectations
in \eqref{eq:integer_u_1} and the average over $\varX_{u+1}$ is carried out 
(later) without conditioning on $\wvec$ and $\Hmat$.  With this in mind, 
it follows that given $\varX_{u+1}$, the set $\{\Hmat \xvec_{a}\}$ 
consists of CSCG RVs with  
correlations $\E_{\Hmat}\{(\Hmat \xvec_{a})(\Hmat \xvec_{b})^{\herm}\} = 
M^{-1}\xvec_{b}^{\herm}\xvec_{a}\Imat_{N} = Q_{a,b}\Imat_{N}$ that are deterministic
in the LSL.  Thus, we can replace $\{\Hmat(\xvec_{0} - \xvec_{a})\}_{a=1}^{u}$
by a set of CSCG RVs $\{\Dvec_{a}\}_{a=1}^{u}$ and use Gaussian integration
\eqref{eq:Gint} to average over both $\wvec$ and $\{\Dvec_{a}\}_{a=1}^{u}$ to obtain
(for details, see Appendix~\ref{sec:mismatched_avg_channel}.)
\begin{IEEEeqnarray}{rCl}
	\label{eq:Xi_Q_1}
\Xi(u) 
&=&
\int  \e^{N G^{(u)}(\mtxQ)} \mu(\mtxQ) \dx \mtxQ, \\
G^{(u)}(\mtxQ) 
&=& -u \ln \det [\Rmat_{\wvec}+(\QQ-\Qq)\Imat_{N}] \IEEEnonumber\\
&& \qquad \quad - u \ln \pi - \ln (u + 1) \IEEEeqnarraynumspace
\label{eq:Gu_1_simple}
\end{IEEEeqnarray}
where $\mtxQ$ should be understood to be in 
its RS parametrized form and $\mu(\mtxQ)$ is the PDF 
of the overlap matrix $\mtxQ$.

\begin{remark}
	\label{remark:u_and_universality}
	Firstly, note that due to the RS assumption (\textbf{Step~3}), 
	the function \eqref{eq:Gu_1_simple} is of a form that does not restrict
	$u$ to be an integer, as desired.  This is one of the reasons why we need 
	to express matrix $\mtxQ$ in a parametrized way instead of using it 
	``as-is''. 
	Secondly, there is some universality in this derivation and 
	the form \eqref{eq:Xi_Q_1} is a typical result of 
	replica analysis. 
	In some cases, however, different techniques are needed. 
	One example is non-IID\ ``mixing matrix'' that requires
	direct matrix integration \cite{Muller-Guo-Moustakas-2008,
	Vehkapera-Kabashima-Chatterjee-trit2013}.
\end{remark}

\subsection{Distribution of the Overlap Matrix and Large Deviations}

The second major step in the analysis is to find an
explicit formula for $\mu(\mtxQ)$, i.e., for
the probability weight of the set $\{ \xvec_{a} \}_{a=0}^{u}$
that satisfies $Q_{a,b} = M^{-1}\xvec_{b}^{\herm} \xvec_{a}^{}$.
The form
of \eqref{eq:Xi_Q_1} suggest that we should try to represent
$\mu(\mtxQ)$  as an exponential whose argument is linear in $N$ (or $M$)
so that we can employ Laplace's method or the 
method of steepest descent to evaluate
the integral w.r.t.\ $\mtxQ$.  If $\xvec_{a}\in \mathbb{R}^{M}$,
due to \eqref{eq:p_of_x}, the elements 
of $\xvec_{a}$ are IID\ for all $a = 0,1,\ldots,u$ and 
$\mu$ follows the large deviation principle \cite{Mezard-Montanari-2009,
Dembo-1998}.  Informally this implies%
\footnote{We use notation $a_{M} \asymp b_{M}$ to denote
``equality up to the leading exponential order'', that is
$\lim_{M\to\infty} M^{-1} \ln (a_{M} / b_{M}) = 0$.}
$\mu(\mtxQ) \asymp \e^{- M c^{(u)}(\mtxQ)}$, 
where the \emph{rate function} 
\begin{equation}
\label{eq:cfunc}
c^{(u)}(\mtxQ) = \sup_{\mtxQtil}
\bigg\{ \mathrm{tr} (\mtxQ \mtxQtil)
- \lim_{M\to\infty }\frac{1}{M} \ln \MGFu (\mtxQtil)
\bigg\},
\end{equation}
describes the exponential behavior of the probability, 
\begin{equation}
	\label{eq:MFGu}
\MGFu (\mtxQtil) =
\E_{X_{u+1}} \bigg\{
\exp\bigg(\sum_{a,b=0}^{u}
\tilde{Q}_{a,b} \xvec_{b}^{\herm} \xvec_{a}\bigg)\bigg\},
\end{equation}
is the moment generating function (MGF) associated with $\mu(\mtxQ)$ and
the supremum is over all $(u+1)\times(u+1)$
matrices $\mtxQtil$ that have the same RS form as 
$\mtxQ$, that is, 
$\tilde{Q}_{0,0} = \Qptil$, $\tilde{Q}_{0,a} = \Qmtil, a\geq 1,$
$\tilde{Q}_{a,a} = \QQtil, a\geq 1,$ and $\tilde{Q}_{a,b} = \Qqtil, a\neq b \geq 1$. 
Thus, we can assess \eqref{eq:Xi_Q_1} 
in the LSL up to the leading order by using the exponential form of $\mu$
and Laplace's method, namely,
\begin{IEEEeqnarray}{rCl}
	\label{eq:Xi_Q_2_1}
\Xi(u) 
&\asymp&
\int  \e^{M \alpha^{-1} G^{(u)}(\mtxQ)}\e^{- M c^{(u)}(\mtxQ)} \dx \mtxQ 
\IEEEnonumber\\
&=& \int  \exp\big(N [ \alpha^{-1} G^{(u)}(\mtxQ) -  c^{(u)}(\mtxQ)]\big) \dx \mtxQ
\IEEEeqnarraynumspace\\
&\asymp& \exp\bigg(M \sup_{\mtxQ,\mtxQtil} \big\{T^{(u)}(\mtxQ, \mtxQtil)\big\}\bigg),
	\label{eq:Xi_Q_2_2}
\end{IEEEeqnarray}
where we denoted for notational convenience
\begin{equation}
	T^{(u)}(\mtxQ, \mtxQtil) = 
\frac{1}{\alpha} G^{(u)}(\mtxQ) 
		- \mathrm{tr} (\mtxQ \mtxQtil)
		+ \lim_{M\to\infty }\frac{1}{M} \ln \MGFu (\mtxQtil).
\end{equation}
For complex vectors $\{\xvec_{a}\}$, 
the end result is essentially the same and the solution to 
the supremum is found among the critical points of the 
argument (see e.g., 
\cite{Wen-Wong-Chen-TCOM2007, Muller-Guo-Moustakas-2008, 
Takeuchi-Muller-Vehkapera-Tanaka-2013}).  
The large deviations analysis also guarantees that 
$\mtxQtil$ is in general a real symmetric matrix and 
if $(\mtxQ^{*},\mtxQtil^{*})$ is the solution
of the optimization problem in \eqref{eq:Xi_Q_2_2} 
then $T^{(u)}(\mtxQ^{*}, \mtxQtil^{*}) \in \mathbb{R}$, as expected
since $\outF$ is in our case real.

However, in RM there is some ambiguity as to whether the correct 
point in the saddle-point approximation \eqref{eq:Xi_Q_2_2} 
minimizes or maximizes the 
exponential when we let $u\to 0$ \cite{Nishimori-2001,Mezard-Montanari-2009}.
Thus, in RM, we seek in practice the critical points 
and \eqref{eq:F_alternative_1} is thus of the form
\begin{equation}
	\outF = -  \lim_{u\to 0^{+}}
		\frac{\partial}{\partial u} 
		\extr_{\mtxQ,\mtxQtil}
		\big\{T^{(u)}(\mtxQ, \mtxQtil)\big\},
\end{equation}
where $\extr_{X}\{h(X)\}$
denotes finding the critical points of a function $h(X)$.  

\subsection{Decoupled MGF and Critical Points}

The second part of replica analysis where the RS assumption
plays an important role (for the first one, see 
Remark~\ref{remark:u_and_universality}) is when we try to 
solve \eqref{eq:MFGu} and find the critical points of
$T^{(u)}(\mtxQ, \mtxQtil)$.  For the simplified setup in this section
where $\mtxQ$ and $\mtxQtil$ are represented with parameter 
$\{\QQ,\Qq\}$ and $\{\QQtil,\Qqtil\}$, respectively,
the MGF can be expressed as (see, e.g., \cite{Guo-Verdu-2005Jun} for details)
\begin{IEEEeqnarray}{rCl}
\MGFu(\mtxQtil) 
 =
 \prod_{m=1}^{M}\Bigg[ 
 \bigg(\frac{\Qqtil}{\pi}\bigg)^{-u}
 \!\!\! 
 \int \big[
 \E_{x_{m}}
 g(z_{m} \mid x_{m};\, \Qqtil^{-1})
 \big]^{u+1} \dx z_{m}\Bigg],
  \IEEEnonumber\\
\end{IEEEeqnarray}	
where $z_{m}$ are just dummy variables.
On the other hand, 
finding the critical points involves
taking eight partial derivatives for the RS case in \textbf{Step 3}
(for the simplified case here, four is enough).
Then, one should pick the solution that satisfies the conditions at the critical point
while providing the global extremum of \eqref{eq:F_alternative_1}.  
In the case considered here, we can actually get rid of two parameters since 
$\Qp = M^{-1}\E\|\xvec\|^{2}$ and $\Qptil = 0$ always at the critical point.
Note that if we did not parametrize $\mtxQ$, the critical points would be described 
by $u(u+1)$ equations and $\Xi(u)$ would depend explicitly on the fact that 
$u$ is an integer.  This is one of the reasons why even the full-RSB solution
(see \cite{Nishimori-2001,Mezard-Montanari-2009}) uses a round-about way of 
presenting $\mtxQ$ instead of using it ``as-is''.

Finally, we remark that it is quite common (see, e.g., \cite{Guo-Verdu-2005Jun}) 
to represent the end result in terms of new variables.  For example, if we have
equal transmit powers for each antennas $\gammabar = \gammabar_{m}$ 
in the simplified case considered here, then the parameters $\eta = \Qqtil$ and 
$\varepsilon = Q - q = \gammabar - q$ fully describe the RS matrices 
$\mtxQ$ and $\mtxQtil$ at the critical point.  
The former variable is inverse noise variance of a decoupled
Gaussian channel 
\begin{equation}
	z = x + w, \qquad p(w) = g(w \mid 0;\,\eta^{-1}), 
\end{equation}
and the latter variable $\varepsilon$ is the MMSE of this channel when the inputs are drawn
according to $p_{X}(x)$. 
The rest of RM is straightforward, albeit tedious algebra to arrive at 
\eqref{eq:F_alternative_1}.

\section{Replica Analysis for Mismatched Case}
\label{sec:replicas}

The analysis herein follows the main steps of RM as listed in 
Appendix~\ref{sec:replica_overview}.  Reader who is not 
familiar with the RM is encouraged to use discussion there as a 
guide to the derivations below.

\subsection{Replica Trick}

Let us consider the function $f(s)$ (free-energy) defined in
\eqref{eq:mismatched_free_energy}.  We then postulate that it can 
be expressed in the LSL using the standard 
replica trick (cf.\ Appendix~\ref{sec:replica_overview})
\begin{equation}
\label{eq:freeE_mismatched}
f(s) = 
 - 
 \lim_{M\to\infty} 
  \frac{1}{M}
 \lim_{u\to 0} \frac{\partial}{\partial u}
  \ln  \Xi^{(u,M)}(s),
\end{equation}
where we defined for later convenience
\begin{equation}
\Xi^{(u,M)}(s) = 
\E \bigg\{ \prod_{a=1}^{u} 
\e^{
-
[\wvec + \Hmat(\chivec_{0} - \chivec_{a})]^{\herm}
\RmatSP^{-1}
[\wvec + \Hmat(\chivec_{0} - \chivec_{a})]}\bigg\},
\label{eq:xi_mismatched}
\end{equation}
and denoted%
\footnote{
We remind the reader that for the case of mismatched decoding, 
the postulated covariance matrix $\RmatP$ is fixed 
by definition so that $\RmatSP=s^{-1}\RmatP$ is also a fixed predefined matrix. 
This is in contrast to the case of matched decoding 
\eqref{eq:true_decoding_pdf_expanded}, where the effective covariance 
matrix $\Rmat_{\wvec}+\Hmat \Rmat_{\vvec} \Hmat$ is random and depends
directly on the channel matrix $\Hmat$.
\label{fn:nonvalid_for_matched}}
$\RmatSP=s^{-1}\RmatP$ along with
$\chivec_{0} = \xvec_{0} + \vvec_{0}$
and $\chivec_{a} = \xvec_{a}, a = 1,\ldots,u$.
Here $\xvec_{0}$ is the original transmit vector in \eqref{eq:yvec_true}
and $\{\xvec_{a}\}_{a=1}^{u}$ are replicated 
data vectors, which are IID drawn according to $p(\xvec)$ when
conditioned on $\{\xvec_{0}, \vvec_{0}, \wvec, \Hmat\}$.  
On the other hand,  $\vvec_{0}$ represents the
noise plus distortion component at the transmit-side
that is CSCG with covariance matrix $\Rmat_{\vvec}$.
Starting with \eqref{eq:xi_mismatched}, 
the goal is then to obtain a functional expression for $\Xi^{(u,M)}(s)$
in the LSL that does not enforce $u$ to be an integer and then use 
\eqref{eq:freeE_mismatched} to obtain the desired quantity.
In the following, explicit limit notations are often 
omitted for notational convenience.

\subsection{Average Over the Channel and Noise}
\label{sec:mismatched_avg_channel}

To proceed with the evaluation of \eqref{eq:xi_mismatched}, 
we first make the RS assumption
\begin{IEEEeqnarray}{Cll}
\label{eq:RSsymm_p}
\Qp\, &= M^{-1} \|\chivec_{0}\|^{2}, \\
\Qm\; &= M^{-1} \chivec_{0}^{\herm} \chivec_{a}^{}, \qquad &a = 1,\ldots,u,\\
\QQ\; &= M^{-1} \|\chivec_{a}\|^{2}, \qquad &a = 1,\ldots,u, \\
\Qq\, &= M^{-1} \chivec_{a}^{\herm}\chivec_{b}^{}, \qquad &a\neq b \in
 \{1,\ldots,u\}. \IEEEeqnarraynumspace
 \label{eq:RSsymm_q}
\end{IEEEeqnarray}
and remind the reader that if we average first over $\Hmat$, 
the empirical correlations between $\{\xvec_{a}\}_{a=0}^{u}$ are not zero 
in general as discussed in Appendix~\ref{sec:replica_overview}.
Thus, noticing that
\begin{IEEEeqnarray}{l}
\E_{\Hmat}\{ [\Hmat(\chivec_{0} - \chivec_{a})^{}] 
[\Hmat(\chivec_{0} - \chivec_{b})]^{\herm}\}
\IEEEnonumber\\
\qquad =
\begin{cases}
\big[\Qp - (\Qm + \Qm^{*}) + \QQ\big]\Imat_{N},
\quad & a = b, \\
\big[\Qp - (\Qm + \Qm^{*}) + \Qq\big]\Imat_{N},
\quad & a \neq b,
\end{cases}
\IEEEeqnarraynumspace
	\label{eq:RS_correlations}
\end{IEEEeqnarray}
we may replace $\{\Hmat(\chivec_{0} - \chivec_{a})\}_{a=1}^{u}$ in 
\eqref{eq:xi_mismatched}
in the LSL by CSCG vectors
$\{\Dvec_a\}_{a=1}^{u}$ that are constructed as 
\begin{IEEEeqnarray}{rCl}
\Dvec_a &=& \tvec_{a}\sqrt{\QQ-\Qq} 
+ \uvec \sqrt{\Qp - (\Qm + \Qm^{*}) + \Qq} 
\IEEEeqnarraynumspace\\
 &=& \tvec_{a}\sqrt{A} + \uvec \sqrt{B},
 \label{eq:RS_v}
\end{IEEEeqnarray}
where $\big\{\uvec, \{\tvec_{a}\}_{a=1}^{u} \big\}$ are IID 
standard complex Gaussian RVs independent of $\wvec$. 
Plugging \eqref{eq:RS_v} into $\Xi^{(u,M)}(s)$
and recalling that $\RmatSP$ is a fixed
predefined matrix gives
\begin{IEEEeqnarray}{l}
\Xi^{(u,M)}(s)
=\frac{1}{\det(\Rmat_{\wvec})}
\E
\int 
\frac{\dx \wvec}{\pi^{N}}
\e^{-\wvec^{\herm}(\Rmat_{\wvec}^{-1}+u\RmatSP^{-1})\wvec}
\IEEEnonumber\\
\quad \times\int \frac{\dx \uvec}{\pi^{N}}
\e^{- \uvec^{\herm} (\Imat+u B\RmatSP^{-1})\uvec
- 2 \Re\{\wvec^{\herm} (u\sqrt{B}\RmatSP^{-1}) \uvec\}} 
\label{eq:eG_1}\\
\quad \times\bigg[\int 
\e^{- \tvec^{\herm}(
\Imat+A\RmatSP^{-1})\tvec
+ 2 \Re \{[- \sqrt{A}\RmatSP^{-1}( \wvec
+ \sqrt{B}\uvec)]^{\herm}\tvec\}} \frac{\dx \tvec}{\pi^{N}} \bigg]^{u}.   
\IEEEnonumber 
\end{IEEEeqnarray}
Next, Gaussian integration \eqref{eq:Gint}
is applied on the integral w.r.t.\ $\tvec$.  
Using also \eqref{eq:matrix_id1} 
we arrive at
\begin{IEEEeqnarray}{l}
\Xi^{(u,M)}(s) =
\E
\int 
\frac{\dx \wvec}{\pi^{N}}
\frac{\e^{-\wvec^{\herm} (\Rmat_{\wvec}^{-1}
+ u (A\Imat_{N}+\RmatSP)^{-1})\wvec}}{
[\det(\Imat+A\RmatSP^{-1})]^{u}
\det(\Rmat_{\wvec})} 
\IEEEeqnarraynumspace\\
\times\!\int\!
\e^{
- \uvec^{\herm}[
\Imat_{N}+u B (A\Imat_{N}+\RmatSP)^{-1}] \uvec
+ 2 \Re\{
[-u\sqrt{B}(A\Imat_{N}+\RmatSP)^{-1}
\wvec]^{\herm}\uvec\}} \frac{\dx \uvec}{\pi^{N}}.
\IEEEnonumber
\end{IEEEeqnarray}
Application of \eqref{eq:Gint} and \eqref{eq:matrix_id1} again 
for the integral w.r.t.\ $\uvec$ provides
\begin{IEEEeqnarray}{l}
\Xi^{(u,M)}(s)  \IEEEnonumber\\
= \E\Bigg\{\frac{\big[\det(\Imat+A\RmatSP^{-1})\big]^{-u}}{
\det\big[\Imat_{N}+u B (A\Imat_{N}+\RmatSP)^{-1}\big]
\det(\Rmat_{\wvec})} \IEEEnonumber\\
\quad\qquad \times
\int 
\e^{
-\wvec^{\herm}(
 \Rmat_{\wvec}^{-1} + u 
 [(A + u B) \Imat_{N}+ \RmatSP]^{-1})\wvec} \frac{\dx \wvec}{\pi^{N}} \Bigg\}\IEEEeqnarraynumspace\IEEEnonumber\\
=
\E\Bigg\{
\frac{\big(\det\big[
 \Imat_{N} + u 
 \Rmat_{\wvec}\big((A + u B) \Imat_{N}+ \RmatSP\big)^{-1}
\big]\big)^{-1}}{
\det\big[\Imat_{N}+u B (A\Imat_{N}+\RmatSP)^{-1}\big]
\big[\det(\Imat+A\RmatSP^{-1})\big]^{u}}\Bigg\},
\IEEEnonumber\\
\label{eq:after_GInt_detform}
\end{IEEEeqnarray}
where the second line is also obtained through Gaussian 
integration.
The above holds for any $\Rmat_{\wvec}$ and 
$\RmatSP$ that are Hermitian and invertible.
The determinants in \eqref{eq:after_GInt_detform} 
can be further simplified using \eqref{eq:det_id1},
so that recalling $\RmatSP = s^{-1}\RmatP$ and defining
two auxiliary matrices
\begin{IEEEeqnarray}{rCl}
	\label{eq:OmatAPP}
\Omat(\Qp,\Qm,\Qq) 
&=& \Rmat_{\wvec} 
+ (\Qp - (\Qm + \Qm^{*}) + \Qq)\Imat_{N}, \\
\OmatP(\QQ,\Qq)  &=& s^{-1}\RmatP+(\QQ-\Qq)\Imat_{N},
\label{eq:OmatPAPP}
\end{IEEEeqnarray}
that are both Hermitian, we finally have 
\begin{IEEEeqnarray}{rCl}
\Xi^{(u,M)}(s) &=&
\det(s^{-1}\RmatP)^{u}
\E \big\{ \e^{G^{(u)}(\Qp,\Qm,\Qq,\QQ)} \big\}, \\
G^{(u)}(\Qp,\Qm,\Qq,\QQ) &=&
(1-u)\ln \det\OmatP(\QQ,\Qq) \IEEEnonumber\\
\; && - \,
\ln\det\big[
 \OmatP(\QQ,\Qq) + u \Omat(\Qp,\Qm,\Qq)\big],
\IEEEeqnarraynumspace
\end{IEEEeqnarray}
Using the differentiation rule 
$\frac{\partial}{\partial x} \ln \det \vm{A}
=\tr \big(\vm{A}^{-1}
\frac{\partial\vm{A}}{\partial x}\big)$,
where the partial derivative should be understood
as an elementwise operation on $\vm{A}$,
we also obtain for later use the equalities
\begin{IEEEeqnarray}{rCl}
\label{eq:Gp}
\frac{\partial}{\partial \Qp}
G^{(u)}(\mtxQ) &=& %
-u  \tr \big( (\OmatP + u \Omat)^{-1}\big),
\\
\frac{\partial}{\partial \Qm}
G^{(u)}(\mtxQ) %
&=& \frac{\partial}{\partial \Qm^{*}}
G^{(u)}(\mtxQ) %
= u 
\tr \big( (\OmatP + u \Omat)^{-1}\big), 
\label{eq:Gm}\\
\frac{\partial}{\partial \Qq} G^{(u)}(\mtxQ) %
&=& u(u-1) 
\tr \big(\OmatP^{-1}
\Omat
(\OmatP + u \Omat)^{-1} \big), 
\label{eq:Gq}\\
  \frac{\partial}{\partial \QQ}G^{(u)}(\mtxQ) %
&=& u  \tr \big(\OmatP^{-1}
\Omat
(\OmatP + u \Omat)^{-1} \big)
- u \tr \big(\OmatP^{-1}\big), 
\label{eq:GQ}
\IEEEeqnarraynumspace
\end{IEEEeqnarray}
where the dependencies to 
$\{\Qp,\Qm,\Qq,\QQ\}$ were omitted on the RHSs
of the equations for notational simplicity.

\subsection{Distribution of the Overlap Matrix and Large Deviations}

Let us now 
write the general form of empirical correlations between 
$\{\Dvec_{a}\}$ as
\begin{IEEEeqnarray}{rCl}
\label{eq:Sab-element2}
\frac{1}{M}
\E_{\Hmat}\{\Dvec_{b}^{\herm} \Dvec_{a}\}
&=&
\bigg(
\frac{\|\chivec_{0}\|^{2}}{M}
- \frac{\chivec_{b}^{\herm} \chivec_{0}^{}}{M}
- \frac{\chivec_{0}^{\herm} \chivec_{a}^{}}{M}
+ \frac{\chivec_{b}^{\herm} \chivec_{a}^{}}{M}
\bigg) \IEEEnonumber\\
&=&
\big(
Q_{0,0}
-Q_{0,b} - Q_{a,0}
+ Q_{a,b}
\big),
\label{eq:S_as_Q}
\end{IEEEeqnarray}
where $Q_{a,b}$ are the elements of the overlap matrix
$\mtxQ \in \mathbb{C}^{(u+1)\times(u+1)}$ and 
have the obvious definitions.
We then need to find a suitable formula for the rate function \eqref{eq:cfunc}.
By the RS assumption,
\begin{equation}
\label{eq:trace_rs}
\mathrm{tr} (\mtxQ \mtxQtil)
= \Qp \Qptil + u\Qmtil(\Qm+\Qm^{*})
+u\QQ\QQtil + u (u-1) \Qq\Qqtil,
\end{equation}
since $\mtxQtil$ is real symmetric and we may write \eqref{eq:freeE_mismatched} as
in \eqref{eq:EZ_rs_1} at the top of the next page,
\begin{figure*}
\begin{IEEEeqnarray}{rCl}
 f_{\mathsf{RS}} &=&  
  -\lim_{M\to\infty} 
   \frac{1}{M}
  \ln \det(\RmatSP) -
\extr_{\mtxQ,\mtxQtil}
\lim_{M\to\infty}
\bigg\{
   \frac{1}{M}   \lim_{u\to 0} \frac{\partial}{\partial u}
   G^{(u)}(\mtxQ) \IEEEnonumber\\
&& \qquad \qquad  \qquad \qquad - 
  \lim_{u\to 0} \frac{\partial}{\partial u}\big[\Qp \Qptil + u(\Qm\Qmtil^{*}+\Qmtil\Qm^{*})
+u\QQ\QQtil + u (u-1) \Qq\Qqtil\big]
+\frac{1}{M} \sum_{m=1}^{M} \lim_{u\to 0} \frac{\partial}{\partial u} 
\ln \MGFu_{m} (\mtxQtil)
\bigg\} \IEEEeqnarraynumspace
\label{eq:EZ_rs_1}
\end{IEEEeqnarray}
\hrulefill
\end{figure*}
where the per-antenna rate function reads
\begin{equation}
\MGFu_{m} (\mtxQtil) = 
\E_{\{\chi_{a,m}\}} \Bigg\{\exp
\bigg[
\sum_{a=0}^{u}
\sum_{b=0}^{u}
\tilde{Q}_{a,b} \chi_{b,m}^{*} \chi_{a,m}
\bigg]\Bigg\},  
\label{eq:MGFm}
\end{equation}
and $\chivec_{a} = [\chi_{a,1} \; \cdots\;\, \chi_{a,M}]^{\trans}$.

\subsection{Decoupled MGF and Critical Points}
\label{sec:critical_points}

The first set of equations for the critical point arises from the equality
\begin{IEEEeqnarray}{l}
\frac{\partial}{\partial x}\mathrm{tr} (\mtxQ \mtxQtil)
= \frac{1}{M}\frac{\partial}{\partial x}G^{(u)} (\mtxQ), 
\end{IEEEeqnarray}
for $x\in\{\Qp,\Qm,\Qq,\QQ\}$.  The partial derivatives
on the LHS are trivial due to \eqref{eq:trace_rs} and the RHSs
we already obtained in 
\eqref{eq:Gp}--\eqref{eq:GQ}.  
If we drop the 
explicit dependence of $\Omat$ and $\OmatP$ on
$\{\Qp,\Qm,\Qq,\QQ\}$ for notational simplicity,
the RS conjugate parameters satisfy
\begin{IEEEeqnarray}{rCl}
\label{eq:Qptil}
\Qptil &=&
-u \frac{1}{M}
\tr \big[ (\OmatP + u \Omat)^{-1}\big] = -u \Qmtil,
\\
\label{eq:Qmtil}
\Qmtil &=&
\frac{1}{M}
\tr \big[ (\OmatP + u \Omat)^{-1}\big],
\\
\label{eq:Qqtil}
\Qqtil
&=&
\frac{1}{M}
\tr \big[\OmatP^{-1}
\Omat
(\OmatP + u \Omat)^{-1} \big],
\IEEEeqnarraynumspace \\
\QQtil  &=&
\frac{1}{M} \tr \big[\OmatP^{-1}\Omat
(\OmatP + u \Omat)^{-1} \big]
- \frac{1}{M} \tr \big(\OmatP^{-1}\big).
\IEEEeqnarraynumspace
\label{eq:QQtil}
\end{IEEEeqnarray}
Note that the above implies 
that in the limit $u\to 0$, we have
$\Qptil \to 0,$ and
$\Qmtil \to -(\QQtil - \Qqtil)$, so that the relevant critical
point can be written by using 
two instead of four ``tilde-parameters''.

The next task is to obtain an explicit
expression for the per-component moment generating function (MGF)
in \eqref{eq:MGFm} that does not require $u$
to be an integer.  Since this part is closely similar
to the analysis carried out, e.g., in \cite{Guo-Verdu-2005Jun}
we omit the details of the derivations.
Following the notation of \cite{Guo-Verdu-2005Jun}, we let 
$\xi = \Qmtil$ and 
$\eta = \Qmtil^{2} / \Qqtil$ which is sufficient to describe $\mtxQtil$
here.  Then, if we denote 
$\chi_{m} = x_{m} +  \vsym_{m}$ and
$\tilde{\chi}_{m} = \tilde{x}_{m}$, the scalar MGF
\eqref{eq:MGFm} can be written as
\begin{IEEEeqnarray}{ll}
	\label{eq:MGF_decoupled_2}
	\MGFu_{m} (\mtxQtil)  =& 
	\bigg(\frac{\pi}{\xi}\bigg)^{u}
	\E \bigg\{\int \dx z_{m}\,
	\e^{u \xi(|z_{m}|^{2} - |\chi_{m}|^{2})}
	\IEEEnonumber\\
	&\qquad \qquad \times
	p(z_{m} \mid \chi_{m})
	\big[\E_{\tilde{\chi}_{m}} q(z_{m} \mid \tilde{\chi}_{m})\big]^{u}
	\bigg\}, \IEEEeqnarraynumspace
	\label{eq:MGF_decoupled_3}
\end{IEEEeqnarray}
where
$p(z_{m} \mid \chi_{m}) =  g(z_{m} \mid \chi_{m};\,\eta^{-1})$ and
$q(z_{m} \mid \tilde{\chi}_{m}) 
= g(z_{m} \mid \tilde{\chi}_{m};\, \xi^{-1})$.
As a consequence of the above, 
$u$ does not need to be an integer anymore
and the limit $u\to 0$ is well defined.
From the partial derivatives of 
$\{\Qptil,\Qmtil,\Qqtil,\QQtil \}$ we obtain the 
second set of conditions at the critical point
\begin{IEEEeqnarray}{rCl}
\Qp &=& 
\lim_{M\to\infty }\frac{1}{M} 
\sum_{m=1}^{M}
\E |x_{m} + \vsym_{m}|^{2},
\label{Qp:final} \\
\QQ &=& 
\lim_{M\to\infty }\frac{1}{M} 
\sum_{m=1}^{M} 
\E \langle |\tilde{x}_{m}|^{2} \rangle_{q},
\IEEEeqnarraynumspace\\
\Qm &=& 
\lim_{M\to\infty }\frac{1}{M} 
\sum_{m=1}^{M} 
\E (x_{m} + \vsym_{m}) \langle \tilde{x}_{m}^{*}\rangle_{q},
\IEEEeqnarraynumspace\\
\Qq &=& 
\lim_{M\to\infty }\frac{1}{M} 
\sum_{m=1}^{M} 
\E\langle \tilde{x}_{m}^{*}\rangle_{q} \langle \tilde{x}_{m}^{*}\rangle_{q}, \IEEEeqnarraynumspace
\end{IEEEeqnarray}
where 
$x_{m},\tilde{x}_{m}\sim p(x_{m})$,
$\vsym_{m} \sim %
g(\vsym_{m} \mid 0;\, r^{m}_{\vvec})$,
\begin{IEEEeqnarray}{rCl}
\label{eq:decoupled_estimator}
\langle f(\tilde{x}_{m}) \rangle_{q} &=&  
\E_{\tilde{x}_{m}}
f(\tilde{x}_{m})
\frac{q(z_{m} \mid \tilde{x}_{m})}{q(z_{m})},
\IEEEeqnarraynumspace
\end{IEEEeqnarray}
and $q(z_{m}) = \E_{\tilde{x}_{m}} q(z_{m} \mid \tilde{x}_{m})$.
The interpretation is that
\eqref{eq:decoupled_estimator}
represents the conditional
mean estimator for postulated channel
$q(z_{m} \mid \tilde{\chi}_{m})$ when  
the true channel is given by $p(z_{m} \mid \chi_{m})$.
Then the true $\varepsilon = \Qp - (\Qm + \Qm^{*}) + \Qq$, 
and postulated $\tilde{\varepsilon} =  \QQ - \Qq$ MMSE 
reduce to 
\eqref{eq:mismatched_epsilon}~and~\eqref{eq:mismatched_epsilon_postulated}, respectively.
Finally, computing the partial derivatives w.r.t.\ $u$ in \eqref{eq:EZ_rs_1}
and taking the limit $u\to 0$ provides after some algebra
the free energy \eqref{eq:FreeE_mismatched_general}.

\bibliographystyle{IEEEtran}
\bibliography{TXnoiseMIMO}

\end{document}